\setlist[enumerate,1]{label=(\roman*)}
\newcommand*\bigcdot{\mathpalette\bigcdot@{.6}}
\newcommand*\bigcdot@[2]{\mathbin{\vcenter{\hbox{\scalebox{#2}{$\m@th#1\bullet$}}}}}
\def\E{{\mathbb E}}
\def\I{{\cal I}}
\def\P{{\mathbb P}}
\def\R{{\mathbb R}}
\def\I\nd{{\mathbb I}}
\renewcommand{\subset}{\subseteq}
\newtheorem{theorem}{Theorem}[section]
\newtheorem{corollary}[theorem]{Corollary}
\newtheorem{lemma}[theorem]{Lemma}
\newtheorem{proposition}[theorem]{Proposition}
\newtheorem{definition}[theorem]{Definition}
\newtheorem{assumption}[theorem]{Assumption}
\newtheorem{remark}[theorem]{Remark}
\newtheorem{example}[theorem]{Example}
\renewcommand{\epsilon}{\varepsilon}
\newcommand{\nd}{\text{nd}}
\numberwithin{equation}{section}
\title{From Bachelier to Dupire via Optimal Transport
}
 \author{Mathias Beiglb\"ock, Gudmund Pammer,
Walter Schachermayer}
\begin{document}

\maketitle
\begin{abstract} 
Famously mathematical finance was started by Bachelier  in his 1900 PhD thesis where -- among many other achievements -- he also provides a formal derivation of the Kolmogorov forward equation. This forms also the basis for Dupire's (again formal) solution to the problem of finding an arbitrage free model calibrated to the volatility surface. 
 The later result has  rigorous counterparts in the theorems of Kellerer and Lowther. In this survey article we revisit these hallmarks of stochastic finance,  highlighting the role played by some optimal transport results in this context.


{\em Keywords:}  
Bachelier, Dupire's formula, Kellerer's theorem,  optimal transport, martingales, peacocks.
\end{abstract}



\section{Bachelier's work relating Brownian motion to mass transport and the heat equation} \label{ahnbwrbmtthe}


In this section, which is mainly dedicated to the historic point of view,  {we follow \cite{Sc03}} and point out that Bachelier already had some thoughts on \textit{``horizontal transport of probability measures''} in his dissertation \textit{``Th{\'e}orie de la sp{\'e}culation''} \cite{Ba00}, which he defended in Paris in 1900.

In this thesis he was the first to consider a \textit{mathematical model} of Brownian motion. Bachelier argued using infinitesimals by visualizing Brownian motion $(W(t))_{t \geqslant 0}$ as an infinitesimal version of a random walk. His 19th century style argument runs as follows. Suppose that the grid in space is given by 
\begin{equation} \label{grid}
\ldots, \ x_{n-2}, \ x_{n-1}, \ x_{n}, \ x_{n+1}, \ x_{n+2}, \ \ldots
\end{equation}
having the same (infinitesimal) distance $\Delta x = x_{n}-x_{n-1}$, for all $n$, and such that at time $t$ these points have (infinitesimal) probabilities 
\begin{equation} \label{gridp}
\ldots, \ p_{n-2}^{t}, \ p_{n-1}^{t}, \ p_{n}^{t}, \ p_{n+1}^{t}, \ p_{n+2}^{t}, \ \ldots
\end{equation}
under the random walk under consideration. What are the probabilities 
\begin{equation} \label{gridpp}
\ldots, \ p_{n-2}^{t+\Delta t}, \ p_{n-1}^{t+\Delta t}, \ p_{n}^{t+\Delta t}, \ p_{n+1}^{t+\Delta t}, \ p_{n+2}^{t+\Delta t}, \ \ldots
\end{equation}
of these points at time $t+\Delta t$? 

The random walk moves half of the mass $p_{n}^{t}$, sitting on $x_{n}$ at time $t$, to the point $x_{n+1}$. En revanche, it moves half of the mass $p_{n+1}^{t}$, sitting on $x_{n+1}$ at time $t$, to the point $x_{n}$. We thus may calculate the net difference between $p_{n}^{t}/2$ and $p_{n+1}^{t}/2$, which Bachelier identifies with 
\begin{equation}
-\tfrac{1}{2} \, \frac{\partial p^{t}}{\partial x}(x), 
\end{equation}
where we let $x = x_n = x_{n+1}$ which is legitimate for Bachelier as $x_n$ and $x_{n+1}$ only differ by an infinitesimal.

This amount of mass is transported from the interval $(-\infty,x_{n}]$ to $[x_{n+1},\infty)$ during the time interval $ (t,t+\Delta t)$. In Bachelier's own words, this is very nicely captured by the following quote from his thesis: 

\medskip

\textit{``Each price $x$ during an element of time radiates towards its neighboring price an amount of probability proportional to the difference of their probabilities. I say proportional because it is necessary to account for the relation of $\Delta x$ to $\Delta t$. The above law can, by analogy with certain physical theories, be called the law of radiation or diffusion of probability.''}

\medskip

Passing formally to the continuous limit and -- using today's terminology -- denoting by 
\begin{equation} \label{dist}
P_t(x) = \int_{-\infty}^{x} p_t(z) \, \textnormal{d}z
\end{equation}
the distribution function associated to the Gaussian density function $p_t(x)$, Bachelier thus deduces in this intuitively convincing way the relation
\begin{equation} \label{lbhef}
\frac{\partial P}{\partial t} = \frac{1}{2} \frac{\partial p}{\partial x},
\end{equation}
where we have normalized the relation between $\Delta x$ and $\Delta t$ to obtain the constant $1/2$. By differentiating 
(\ref{lbhef}) with respect to $x$ one obtains the usual heat equation 
\begin{equation} \label{heflbhef}
\frac{\partial p}{\partial t} = \frac{1}{2} \frac{\partial^{2} p}{\partial x^{2}}
\end{equation}
for the density function $p(t,x)$. 
Of course, the heat equation was known to Bachelier, and he notes regarding (\ref{heflbhef})
 \textit{``C'est une {\'e}quation de Fourier.''}

Bachelier has thus derived, on a formal level, the Kolmogorov forward equation, also known as Fokker-Planck-equation, for the propagation of a probability density $p$ under Brownian motion. The forward equation will also play an important role subsequently and we take the opportunity to note that Bachelier's argument can equally well be applied to the more general process with increments $dX_t = \sigma(t,X_t)\, dW_t$ to arrive at the PDE
\begin{equation}\label{Fokker0}
\frac{\partial}{\partial t} P= \frac{1}{2} \frac{\partial}{\partial x} \left(\sigma^2 p\right),
\qquad \qquad
\frac{\partial}{\partial t}p = \frac12 \frac{\partial^2}{\partial x^2}\left(\sigma^2 p\right).
\end{equation}

But let us still remain with the form (\ref{lbhef})
 of the heat equation and analyze its message in terms of \textit{``horizontal transport of probability measures''}. One may ask: what is the ``velocity field'', acting on the set of probabilities on $\R$, which moves the probability density $p_t(\cdot)$ to the probability density $p_{t+dt}(\cdot)$? Following Bachelier's intuition and keeping in mind that the mass sitting at time $t$ on $x$ equals $p_t(x)$, the velocity of this move at the point $x$ has to be equal to
 
 \begin{equation} \label{score}
 -\frac{1}{2} \frac{\frac{\partial p_{t}}{\partial x}(x)}{p_t(x)} 
\end{equation}
which has the natural interpretation as the ``speed'' of the horizontal  transport induced by $p_t(x)$. We thus encounter \textit{in nuce} the  ``score function'' $\frac{p'_t(x)}{p_t(x)} = \frac{\nabla p_t(x)}{p_t(x)}$ where the nabla notation $\nabla$ indicates that this is a vector field which makes perfect sense in the $n$-dimensional case too.

At this stage we can relate Bachelier's work with the more recent notion of the \textit{Wasserstein metric} $\mathcal{W}_2 (\cdot,\cdot)$, at least intuitively and at an infinitesimal level. One may ask: what is the necessary kinetic energy needed to transport $p_t(\cdot)$ to $p_{t+dt}(\cdot)$? Knowing the speed (\ref{score}) and the usual formula for the kinetic energy, we obtain for the Wasserstein distance between the two infinitesimally close probabilities $p_t$ and $p_{t+dt}$

\begin{equation}\label{Fisher}
 \frac{\mathcal{W}_2 \left(p_t,p_{t+dt}\right)}{dt} := \frac{1}{2} \left(\int_{\R} \left(\frac{p'_t(x)}{p_t(x)}\right)^2 dx \right)^{\frac{1}{2}}
\end{equation}

For a formal definition of the Wasserstein distance $\mathcal{W}_2 (\cdot,\cdot)$ we refer, e.g., to \cite{Vi09}. While for the finite version of the Wasserstein distance between two probability measures one has to find an optimal transport plan, the situation is simpler -- and very pleasant -- in the case of the infinitesimal transport induced by the vector field (\ref{score}). This infinitesimal transport is automatically optimal.
Intuitively this corresponds to the geometric insight in the one-dimensional case that the transport lines of infinitesimal length cannot cross each other. For a thorough treatment of the geometry of absolutely continuous curves of probabilities such as $\left(p_t(\cdot)\right)_{t \ge 0}$ above we refer to the lecture notes \cite{AmGiSa08}.

\medskip


\medskip

We finish the section by returning to Bachelier's thesis. The \textit{rapporteur} of Bachelier's dissertation was no lesser a figure than Henri Poincar{\'e}. Apparently he was aware of the enormous potential of the section \textit{``Rayonnement de la probabilit{\'e}''} in Bachelier's thesis, when he added to his very positive report the handwritten phrase: \textit{``On peut regretter que M. Bachelier n'ait pas d{\'e}velopp{\'e} davantage cette partie de sa th{\`e}se.''} That is: One might regret that Mr. Bachelier did not develop further this part of his thesis. Truly prophetic words!

\section{Dupire's Formula} \label{Dupire}

We now turn to a well-known and more recent topic in Mathematical Finance continuing the early achievements of Bachelier.

Suppose that in a financial market we know the prices of ``many'' European options on a given (highly liquid) stock $S$. What can we deduce from this data about the prices of exotic, i.e., path-dependent options?

This question leads to the following mathematical idealization: suppose we know the prices of \textit{all} European call options, i.e., the price $C(t,x)$ of every call option with strike price $x$ and maturity $t$, for every $0 \le t \le T$ and $x \in \R_+$. Our task is to analyze the set of all possible (local) martingale measures for the stock price processes which are compatible with this data.
Once we have a hand on the relevant set of martingale measures, we can price arbitrary exotic options by taking expectations.

To make the question meaningful, it is a good idea to restrict the class of processes under consideration, e.g., to continuous, Markovian martingales.

We make the economically meaningful assumptions that the function $(t,x) \mapsto C(t,x)$ is sufficiently smooth, as well as strictly convex in the variable $x$ and strictly increasing in the variable $t$, to allow for the subsequent formal manipulations. 


The first observation is that the knowledge of $C(t,x)$, for $0\le t\le T$ and $x>0$ is tantamount to the knowledge of the marginal probabilities $(\mu_t)_{0 \le t \le T}$ of the underlying stock price process under a martingale measure which determines the prices, via the formula 
\begin{equation} \label{eq3.1}
C(t,x) = \E_{\mu_t}[(S_t-x)_+].
\end{equation}
This observation goes back to the Breeden and Litzenberger \cite{BrLi78}.

If the measures $\mu_t$ are absolutely continuous with respect to Lebesgue measure with a continuous density function $p_t(x)$ then (\ref {eq3.1}) amounts to the relation
\begin{equation}\label{eq3.2}
p_t(x) = C_{xx}(t,x),\qquad x>0,
\end{equation}
as one verifies via integration by parts.

In a very influential and highly cited paper from 1994 \cite{Du94} (compare also the work of Derman and Kani \cite{DeKa94}) Dupire  considered diffusion processes of the form 
\begin{equation}\label{eq3.3}
\frac{dS_t}{S_t}=\sigma(t,S_t)\, dW_t,\qquad 0\le t \le T,
\end{equation}
where the ``local volatility'' $\sigma(\cdot,\cdot)$ is modeled as a {\it deterministic} function  of $t$ and $x$, and $W_t$ is a Brownian motion adapted to its natural filtration $(\mathcal F_t)_{0 \le t \le T }$. It turns out that there is the beautiful and strikingly simple ``Dupire's formula'' which relates $\sigma(\cdot,\cdot)$ to the given option prices $C(t,x)$, namely
\begin{equation}\label{eq3.4}
 \frac{\sigma^2(t,x)}{2} = \frac{C_t(t,x)}{x^2 C_{xx}(t,x)} .
\end{equation}
Indeed, the Fokker-Planck equation implies -- at least on a formal level (cf.\ (\ref{Fokker0})) -- that $p_t(x)$ satisfies the PDE

\begin{equation}\label{Fokker}
\frac{\partial}{\partial t}p_t(x) = \frac{\partial^2}{\partial x^2}\left(\frac{\sigma^2(t,x)}{2}p_t(x)\right)
\end{equation}
Integrating in $x$, using (\ref{eq3.2}), and changing the order of derivatives quickly yields (\ref{eq3.4}).

We note that this beautiful argument is very much in line with Bachelier's reasoning in (\ref{lbhef}) and (\ref{heflbhef}) above pertaining to the case of constant volatility $\sigma$. We note in passing that Bachelier used instead of the wording ``volatility'' the more colorful term ``nervousness of the market''. 

\vspace{1em}



Of course, Dupire's formal arguments need proper regularity assumptions in order to be justified. There are two aspects: existence and uniqueness of the martingales fitting the given option prices $C(t,x)$. As regards the former, the question of existence amounts to a remarkable theorem by Kellerer \cite{Ke72, Ke73}: 
given a family $(\mu_t)_{0 \le t \le T}$ of probability distributions on $\R$ which is  increasing in convex order, there is a Markov martingale having these probabilities as marginals. 
By increasing in convex order we mean that each $\mu_t$ has finite first moment and that $\mu_t (f)$ is nondecreasing in $t$, for every convex function $f$ on $\R$. 
Kellerer's theorem extends earlier work of Strassen \cite{St65} who establish the discrete time version of the result. We also note that the convex order condition on the marginal distributions is necessary as easily follows from Jensen's theorem.


Kellerer's theorem goes far beyond the simple formula (\ref{eq3.4}) and has been further refined, notably by Lowther in an impressive series of papers \cite{Lo08a, Lo08b, Lo08d}. We shall review these results in the subsequent sections.

However, from an application point of view, the existence question is not of primordial relevance. After all, the function $C(t,x)$ is an idealization of reality which has to be estimated from a finite set of given European option prices. In this context, it does not harm to make strong regularity assumptions on the smoothness and convexity (in the variable $x$) of the function $C(t,x)$  which justify the above argument. Under such assumptions Dupire's solution (\ref{eq3.4}) does make sense and the issue of existence is settled.

A different issue is the question of uniqueness. As we shall see below, this question is challenging and relevant -- at least from a mathematical point of view -- even in very regular settings, such as the Bachelier or the Black-Scholes model.

In order to formulate existence and uniqueness results for a process with given marginals, one has to specify the class of processes with respect to which we want to establish existence and uniqueness. Under proper regularity assumptions the unique solution should, of course, equal Dupire's solution. Dupire's process is a \textit{martingale with continuous paths, enjoying the Markov property}. Is Dupire's solution unique within this class? In a veritable tour de force Lowther  \cite{Lo08a, Lo08b} has shown  that the answer is yes, provided that we replace the word \textit{Markov} by the word \textit{strong Markov} and one restricts to continuous processes.

 We also refer to Theorem 6.1 in \cite{HiRoYo14} where a slightly different version of this theorem, credited to Pierre, is proved. These theorems settle the question of uniqueness in a very satisfactory way. We shall discuss Lowther's theorem in more detail in Section 6.

To the best of our knowledge the following question remained open: is it really necessary to add the adjective \textit{strong} to the word \textit{Markov} in Lowther's uniqueness theorem? At least, if one is willing to accept strong regularity assumptions on the function $C(\cdot,\cdot)$ and the resulting process $S$ as defined in (\ref{eq3.3}) one may ask whether the Markov property alone is sufficient. We will focus on this question in the next section. 



\section{An eye-opening example}

The subsequent example is known since the work by Dynkin and Jushkevich in the fifties  (see \cite{DyJu56}). 

\begin{example}\label{easy}

There is an $\R_+$-valued, continuous, Markov martingale which fails to be strongly Markovian.
\end{example}

\begin{proof}\label{easyproof}
We define the process $S=(S_t)_{0\le t\le 1}$ by starting at $S_0 = 1$ and subsequently proceeding in two steps. For $t \in [0,\frac{1}{2}]$ the process $S$ is a stopped geometric Brownian motion, i.e.,
\begin{equation} \label{geom}
S_t = \exp \left(B_{t \wedge \tau} -\frac{t \wedge \tau}{2} \right) \qquad \qquad 0\le t\le \frac{1}{2},
\end{equation}
where $B$ is a standard Brownian motion and $\tau$ is the first moment when $S$ hits the level $2$. 

For $\frac{1}{2} \le t \le1$, we distinguish two cases. If $S$ has been stopped, i.e., if $S_{\frac{1}{2}} = 2$, the process $S$ simply remains constant at the level $2$. If this is not the case, the process continues to follow a geometric Brownian motion, i.e.,
\begin{equation} \label{geom2}
S_t = \exp \left(B_t -\frac{t}{2} \right), \qquad \qquad \frac{1}{2} \le t \le 1 \quad \text{on the set} \quad  \left\{\tau > \frac{1}{2}\right\}  ,
\end{equation}

Obviously $S$ is a continuous martingale. The crucial feature is its Markovian nature: the Markov property follows from the fact that, for every fixed (deterministic) time $\frac{1}{2}\le t\le 1$, the probability for the geometric Brownian motion $S_t$  to be equal to $2$ is zero on the set $\{\tau>\frac{1}{2}\}$. Hence, for every fixed $\frac{1}{2} \le t \le 1$, the conditional law of $(S_u)_{t \le u \le 1}$ is \textit{almost surely} determined by the present value $S_t$ of the process.

Why does $S$ fail to be strongly Markovian? On the set $\{\tau > \frac{1}{2}\}$ define the stopping time $\vartheta$ as the first instance $u > \frac{1}{2}$ when $S_u$ equals the value $2$, which happens with positive probability during the interval $\left( \frac{1}{2} , 1 \right)$. The process $S$ therefore, at time $\vartheta \wedge 1$, takes the value $2$ on a non-negligible part of the set $\{\tau > \frac{1}{2}\}$. Of course, the random variable $S_{\tau}$ equals $2$ on the set $\{\tau\le \frac{1}{2}\}$ too. Hence there is no strong Markovian prescription for the process $S$ what to do after time $\vartheta$: without further information on the past the process $S$ cannot decide whether it should remain constant or continue to move on as a geometric Brownian motion.
\end{proof} 

Let us apply this example to the pricing of options of the form 
$(S_T -x)_+$. Fix $x \ge 2$. It is straightforward to calculate its price $P^{x}(t,z)$ at time $t$, conditionally on $S_t = z$, which is defined via
\begin{equation} \label{cond}
P^{x}(t,z) = \E [(S_T -x)_+|S_t=z],   \qquad \le t\le T, \, z \in \R.
\end{equation}
Letting $z=2$, we find
\begin{equation}\label{case2}
P^{x}(t,z) =0, \qquad 0 <t \le 1.
\end{equation}

On the other hand, for $z \neq 2$ and $\frac{1}{2} \le t <1$, the prices $P^{x}(t,z)$ are given by the usual Black-Scholes formula and therefore strictly positive. Hence, for $\frac{1}{2} \le t <1$, the option prices $z \to P^{x}(t,z)$ are  discontinuous at $z=2$. They also fail to be increasing and convex in the variable $x$ which a reasonable option pricing regime should certainly satisfy. On the other hand we note that these option prices -- strange as they might be -- do not violate the no arbitrage principle as they were legitimately derived from a martingale.

\vspace{1em}

The marginal probabilities of the process $S$ have an atom at the point $2$ which is rather unpleasant. One may ask whether it is possible to construct variants of the above example which have more regular marginals. 

Here is a straightforward modification: fix an uncountable compact $K$ in $\R_+$ with zero Lebesgue measure. For example one may take the classical Cantor set $K = \{1 + \sum_{n=1}^{\infty} \frac{\epsilon_n}{3^n} : \epsilon_n \in \{0,2\}\}$. We can modify the construction of Example \ref{easy} in three steps. For $0 \le t \le \frac{1}{3}$, let $(S_t)_{0 \le t \le \frac{1}{3}}$ be geometric Brownian motion starting at $S_0 =1$. For $\frac{1}{3} \le t \le \frac{2}{3}$, let $(S_t)_{ \frac{1}{3}\le t\le \frac{2}{3}}$ continue to be geometric Brownian motion, but stopped at the first hitting time $\tau$ of the compact set $K$. For $\frac{2}{3} \le t \le 1$ we again distinguish between the cases $\{\tau \le \frac{2}{3}\}$ and $\{\tau > \frac{2}{3}\}$. On the former set the process $S$ remains constant, i.e., $S_t = S_{\frac{2}{3}}$. On its complement $\{\tau > \frac{2}{3}\}$ the process $S$ continues to follow a geometric Brownian motion. The process $S$ thus enjoys all the features of Example \ref{easy} and, in addition, has continuous marginals. Note, however, that these marginals are not given by densities as they are not absolutely continuous with respect to Lebesgue measure.

\vspace{1em}

Turning back to the context of Example \ref{easy} there is another continuous Markovian martingale with the same marginals as $S$, inducing reasonable option prices. In fact, there is a continuous \textit{strong} Markovian martingale with this property and which is unique in this latter class (Theorem \ref{uniqueness theorem} below).

We only give an informal, verbal description of this strong Markov process. For $0 \le t \le \frac{1}{2}$ let $S$ be defined as in Example \ref{easy}. For $\frac{1}{2} \le t \le 1$ we again distinguish two cases. On the set $\{S_t \neq 2\} $ define $S$ to be geometric Brownian motion, but now we stop this motion when $S_t$ hits the value 2. On the other hand, on the set $\{S_t = 2\} $ the process $S$ starts an excursion from $S_t=2$ into the geometric Brownian motion (\ref{geom2}) with a certain intensity rate. We are free to choose this rate in such a way that the mass remaining at the atom $\{S_t=2\}$ equals precisely the constant mass which is prescribed by the given marginals of the process $S$.

We thus have indicared the construction of another continuous martingale having the same marginals as the process $S$ in Example \ref{easy}. One may check that the latter construction is strongly Markovian -- as opposed to the above construction in Example \ref{easy} -- and that the option prices are increasing and strictly convex in the variable $z$ as they should be. It will follow from Theorem \ref{uniqueness theorem} below that the latter martingale is the unique strong Markov solution for the given marginals.


Note that this answers the question raised at the end of the last section: In Lowther's uniqueness theorem, it is not sufficient to consider Markovian (but not necessarily \emph{strongly} Markovian) martingales: as we have just seen, there exist two distinct continuous Markov martingales with the same 1-dimensional distributions. 

 In view of Dupiere's formula, this leads to the next question. It seems natural to conjecture that, provided the call prices are sufficiently regular in $t$ and $x$, there should be only one continuous Markov martingale matching these prices. Correspondingly one would ask:  can one obtain a similar example as above  with absolutely continuous (or even more regular) marginals?

 To the surprise of the present authors it turned out that the answer is ``yes'', even when we pass to the ``most regular'' situation when $S$ is a Brownian motion, i.e., in the Bachelier model (or the Black-Scholes model). The construction is more involved but resting on the above developed intuition, see the accompanying paper \cite{BePaSc21a}. 


\section{Uniqueness of Dupire's diffusion}

There is a huge literature on one-dimensional processes inducing a given family of marginal distributions (see \cite{Ke72, MaYo02, HiRo12, HiRoYo14, BeHuSt16, Lo08b, HaKl07, FaHaKl15, Ho13, Ol08, Al08, BaDoYo11, KaTaTo15} among others). In particular, the late Marc Yor and his co-authors Hirsch, Profeta, and Roynette wrote the beautiful book \cite{HiPrRoYo11} on ``peacocks''.
  This is a  pun on the French acronym PCOC, for ``Processus Croissant pour l'Ordre Convexe'' and a \emph{peacock} is a stochastic process $(X_t)_{t\geq 0}$ for which the family of laws $ \text{law}(X_t), t\geq 0$ is increasing in the convex order. We take here the liberty to use the word peacock also for a family of probabilities $(\mu_t)_{t\geq 0}$ that increases in convex order.\footnote{Fortunately, ``Probabilit\'es Croissant pour l'Ordre Convexe'' still yields the same acronym.}

 To comply with this literature we find it more natural to pass from the multiplicative setting (\ref{eq3.3}) to the additive setting of a martingale diffusion

\begin{equation}\label{eq3.3add}
{dX_t}=\sigma(t,X_t)\, dW_t,\qquad 0\le t \le T.
\end{equation}
Hence we consider now processes taking possibly values in all of $\R$ and switch to the notation $X$ instead of the ``stock price'' $S$. We note, however, that this change is only for notational reasons, and everything below could also be done in the multiplicative setting of the previous sections.

Given a peacock $(\mu_t)_{t\geq 0}$ we may define option prices via
\begin{equation} \label{eq3.1ad}
C(t,x) = \E_{\mu_t}[(X_t-x)_+],
\end{equation}
where $\mu_t, t\geq 0$ denote the marginal probability measures and $x \in \R$. The `multiplicative' formula (\ref{eq3.4}) becomes in the additive setting
\begin{equation}\label{eq3.4add}
 \frac{\sigma^2(t,x)}{2} =  \frac{C_t(t,x)}{ C_{xx}(t,x)} .
\end{equation}

We can now cite Lowther's complete solution of the uniqueness problem within the class of continuous, strong Markov martingales. We stress (and admire) that this theorem does not require any additional regularity assumptions \cite[Theorem 1.2]{Lo08a}.

\begin{theorem}[Lowther] \label{uniqueness theorem}
Let $X= (X_t)_{0\le t\le1}$ and $Y= (Y_t)_{0\le t\le1}$ be  $\R$-valued, {\it continuous, strong Markov martingales}. 
If  $X$ and $Y$ have the same marginal distributions they also have the same joint distributions. 
\end{theorem}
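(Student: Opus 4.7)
The plan is to reduce the statement to equality of the transition kernels of $X$ and $Y$. Since both processes share the initial law $\mu_0$, once I show that the kernels $P^X_{s,t}(x,\cdot)$ and $P^Y_{s,t}(x,\cdot)$ agree for $\mu_s$-a.e.\ $x$ and every $0\le s<t\le 1$, the Markov property will propagate this equality to all finite-dimensional distributions, and path-continuity will then lift it to equality of the laws on $C([0,1],\R)$.

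The first structural step is to establish the \emph{Lipschitz Markov property}: for any $\R$-valued continuous strong Markov martingale $Z$ with transition kernels $P^Z_{s,t}$, and any $x<y$, one has
\[
\mathcal{W}_1\bigl(P^Z_{s,t}(x,\cdot),\, P^Z_{s,t}(y,\cdot)\bigr)\le y-x.
\]
The natural proof is a coupling argument: run two copies of $Z$ from $x$ and $y$ on a common probability space and let $\tau$ be their first meeting time. By path-continuity the two copies remain ordered on $[s,\tau)$, and by the strong Markov property applied at $\tau$ they can be coupled to coincide on $[\tau,1]$. The difference $(Y_u-X_u)_{u\ge s}$ is then a nonnegative continuous martingale absorbed at $\tau$, and optional stopping (with localisation if needed) yields $\E[|Y_t-X_t|]=y-x$, which bounds the desired Wasserstein distance. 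Equivalently, for every convex $\phi$, the map $x\mapsto T^Z_{s,t}\phi(x):=\int\phi\,dP^Z_{s,t}(x,\cdot)$ is again convex in $x$; this convex-order-preservation version is what I would feed into the next step.

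The heart of the argument, and the step I expect to be the main obstacle, is the following rigidity claim: a martingale kernel $\kappa(x,\cdot)$ which (i) has barycentre $x$ at $x$, (ii) satisfies $\int \kappa(x,\cdot)\,\mu_s(dx)=\mu_t$, and (iii) enjoys the Lipschitz Markov (equivalently, convex-order-preserving) property is determined by the pair $(\mu_s,\mu_t)$ alone. Applied to $\kappa=P^X_{s,t}$ and $\kappa=P^Y_{s,t}$ this gives $P^X_{s,t}=P^Y_{s,t}$ for $\mu_s$-a.e.\ $x$. My approach would be an extremality argument inside the convex set of martingale couplings of $\mu_s$ and $\mu_t$: the Lipschitz / convex-preserving condition should single out a canonical ``monotone'' martingale coupling, in the spirit of the shadow or left-curtain couplings of martingale optimal transport, that is a functional only of the two marginals. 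Concretely, I would try to show that any perturbation of the transport that preserves the marginal and the martingale mean necessarily destroys the convexity of $T_{s,t}\phi$ for some convex $\phi$, contradicting Step 1.

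Once $P^X_{s,t}=P^Y_{s,t}$ is established for every $s<t$, the Markov property together with Kolmogorov's extension theorem yields equality of all finite-dimensional distributions, and the common path-continuity then upgrades this to $\mathrm{law}(X)=\mathrm{law}(Y)$ on $C([0,1],\R)$, completing the proof.
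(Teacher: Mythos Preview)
First, a framing remark: the paper does \emph{not} prove Theorem~\ref{uniqueness theorem}. It explicitly states that ``the proof of this theorem is highly technical and its presentation goes far beyond the scope of the present paper'' and refers to Lowther's original articles. What the paper does prove is a toy version (Theorem~\ref{baby}) under very strong smoothness assumptions, via It\^o's formula and the forward equation~\eqref{eq3.4add}. So you are not being compared against a self-contained proof, but against a deliberate non-proof together with a PDE argument in a regular special case.

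Your first step is correct and in fact coincides with what the paper isolates as the key structural fact: a continuous strong Markov martingale is automatically Lipschitz--Markov, equivalently its transition kernels are increasing in first-order stochastic dominance and propagate convexity. This is exactly Proposition~\ref{Hobson} / Corollary~\ref{corr} / Corollary~\ref{SLM}, proved by the same coupling you describe.

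The genuine gap is your ``rigidity claim'': that a Lipschitz--Markov martingale kernel between two \emph{fixed} marginals $\mu_s$ and $\mu_t$ is uniquely determined by $(\mu_s,\mu_t)$ alone. This is \emph{false}. Take
\[
\mu=\tfrac12\delta_{-1}+\tfrac12\delta_{1},\qquad
\nu=\tfrac14\bigl(\delta_{-3}+\delta_{-1}+\delta_{1}+\delta_{3}\bigr).
\]
Then both
\[
\pi^{(1)}_{-1}=\tfrac12\delta_{-3}+\tfrac12\delta_{1},\quad \pi^{(1)}_{1}=\tfrac12\delta_{-1}+\tfrac12\delta_{3}
\qquad\text{and}\qquad
\pi^{(2)}_{-1}=\tfrac13\delta_{-3}+\tfrac12\delta_{-1}+\tfrac16\delta_{3},\quad \pi^{(2)}_{1}=\tfrac16\delta_{-3}+\tfrac12\delta_{1}+\tfrac13\delta_{3}
\]
are martingale kernels from $\mu$ to $\nu$, and for each one $\pi_{-1}$ is dominated by $\pi_{1}$ in first order (one checks $\mathcal W_1(\pi_{-1},\pi_1)=2=|{-1}-1|$ in both cases), so both are Lipschitz--Markov in the sense of Definition~\ref{Wass}. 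With only two source points, convexity-preservation is vacuous, so even your strengthened version of the hypothesis does not separate them. Hence no ``extremality argument inside the convex set of martingale couplings of $\mu_s$ and $\mu_t$'' can succeed, and the analogy with left-curtain or shadow couplings is misleading: those are specific constructions, not the unique Lipschitz kernels.

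What this shows is that the two-marginal information is simply not enough; Lowther's uniqueness genuinely uses the \emph{continuous-time} structure --- the full peacock $(\mu_t)_{0\le t\le 1}$ together with path-continuity of the process --- and not merely the pairwise kernels. This is also why the paper's toy proof goes through the infinitesimal object $\sigma(t,x)$ and the forward PDE rather than through a discrete transport step. Your outline correctly locates the difficulty (``the step I expect to be the main obstacle''), but the proposed mechanism for resolving it does not work; a correct argument has to exploit the whole family of marginals and continuity simultaneously, which is precisely what makes Lowther's original proof hard.
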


The proof of this theorem is highly technical and its presentation goes far beyond the scope of the present paper. Instead, we formulate a ``toy'' version of the theorem under strong regularity assumptions. We then analyze why the notion of \textit{strong} Markovianity is key in the above theorem and finally give some hints on the strategy of the proof of Theorem \ref{uniqueness theorem}.

\vspace{1em}

Here is the ``toy'' version of this theorem imposing strong regularity assumptions which make life easier:

\begin{assumption}\label{Ass1}
 We suppose that the process $X$ is given by $X_0 = x_0$ and (\ref{eq3.3add}), where $\sigma(t,x)$ is sufficiently smooth to guarantee that there is a unique strong solution $X$. We also suppose that $X_T$ has finite second moment and that the function $C(t,x)$ is strictly convex in the variable $x$, strictly increasing in the variable $t$, and satisfies standard It\^o smoothness assumptions, i.e., twice continuously differentiable in $x$ and once continuously differentiable in $t$. We also assume that, for every $x \in \R$, the pricing function $(t,z) \to P^{X,x}(t,z)$ defined via

\begin{equation} \label{condadd}
P^{X,x}(t,z) = \E [(X_T -x)_+|X_t=z],   \qquad 0 \le t\le T, \, z \in \R.
\end{equation} 
also satisfies these standard It\^o assumptions.

\end{assumption}

These assumptions are strong enough to guarantee that the function $C(t,x)$ indeed satisfies Equation (\ref{eq3.4add}). 



\begin{theorem}\label{baby}
Let $X=(X_t)_{0\le t\le T}$ satisfy Assumption \ref{Ass1}.
Let $Y=(Y_t)_{0\le t\le T}$ be another continuous Markov martingale such that $X_t$ and $Y_t$ have the same distribution, for every $0 \le t \le  T$.
For fixed strike price $x \in \R$, let  $P^{Y,x}(t,z)$ be the corresponding option prices defined via
\begin{equation} \label{condadd'}
P^{Y,x}(t,z) = \E [(Y_T -x)_+|Y_t=z],   \qquad 0 \le t\le T, \, z \in \R,
\end{equation}
and {\bf assume} that, for every $x$, the functions $(t,z) \to P^{Y,x}(t,z)$ also satisfy the above standard It\^o smoothness assumptions.
Then $P^{X,x}(t,z) = P^{Y,x}(t,z)$, for all $t,x,z$ and the processes $X$ and $Y$ have the same joint distributions.
\end{theorem}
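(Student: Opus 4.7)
The plan is to show that $Y$, like $X$, is a weak solution of the same martingale SDE $dY_t = \sigma(t,Y_t)\,dW_t$ and then to invoke the strong uniqueness granted by Assumption~\ref{Ass1}. The unifying tool is that for every fixed strike $x\in\R$, the tower rule together with the Markov property of $Y$ makes $(P^{Y,x}(t,Y_t))_{0\le t\le T}$ a continuous martingale; since $P^{Y,x}$ is assumed to satisfy standard It\^o smoothness and $Y$ is a continuous semimartingale, It\^o's formula applies to this composition.

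Running It\^o on $P^{Y,x}(t,Y_t)$ and forcing the bounded variation part to vanish (using the martingale property of both $P^{Y,x}(t,Y_t)$ and the stochastic integral $\int \partial_z P^{Y,x}\,dY$) yields
\begin{equation*}
\partial_t P^{Y,x}(s,Y_s)\,ds + \tfrac12\,\partial_{zz} P^{Y,x}(s,Y_s)\,d\langle Y\rangle_s = 0
\end{equation*}
as an identity of measures on $[0,T]$, $\P$-almost surely. Applied to $X$, whose dynamics are known, the same computation recovers the familiar backward PDE $\partial_t P^{X,x}(t,z) + \tfrac12\sigma^2(t,z)\,\partial_{zz} P^{X,x}(t,z) = 0$. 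Since the left-hand side of the displayed identity does not depend on $x$, the quotient $-2\,\partial_t P^{Y,x}/\partial_{zz} P^{Y,x}$ is likewise $x$-independent, showing that $d\langle Y\rangle_s/ds$ is a deterministic function of $(s,Y_s)$: thus $Y$ is itself a martingale diffusion of the form $d\langle Y\rangle_t = b(t,Y_t)\,dt$.

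To identify $b$ with $\sigma^2$, I apply Tanaka's formula to $(Y_t - x)_+$ and take expectations,
\begin{equation*}
C(t,x) = (x_0 - x)_+ + \tfrac12\,\E[L_t^x(Y)],
\end{equation*}
whence the occupation density formula yields $\partial_t C(t,x) = \tfrac12\,b(t,x)\,p_t(x)$ with $p_t = C_{xx}$. The analogous computation for $X$ produces the additive Dupire formula $\partial_t C(t,x) = \tfrac12\,\sigma^2(t,x)\,p_t(x)$. Since the marginals of $X$ and $Y$ agree, the call-price functions coincide; the strict convexity of $C$ in $x$ assumed in Assumption~\ref{Ass1} forces $p_t>0$ on the interior of the support, so $b(t,x) = \sigma^2(t,x)$ there. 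Combined with the previous step, this upgrades to $d\langle Y\rangle_t = \sigma^2(t,Y_t)\,dt$.

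To finish, Dambis--Dubins--Schwarz (after a possible enlargement of the probability space) realises $Y$ as $dY_t = \sigma(t,Y_t)\,d\tilde W_t$ with $Y_0 = x_0 = X_0$, so $Y$ is a weak solution of the same SDE as $X$. Strong uniqueness from Assumption~\ref{Ass1} then forces $X$ and $Y$ to have the same law, and in particular $P^{Y,x} = P^{X,x}$ for every $x$, so the joint distributions coincide. The main technical obstacle I anticipate is justifying that $\int_0^\cdot \partial_z P^{Y,x}(s,Y_s)\,dY_s$ is a true martingale (needed to isolate the drift in the It\^o expansion) and handling any exceptional set where $p_t$ or $\partial_{zz}P^{Y,x}$ might degenerate; fortunately, the Tanaka/occupation density route sidesteps any division by $\partial_{zz} P^{Y,x}$ and delivers $b = \sigma^2$ directly.
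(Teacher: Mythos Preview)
Your proposal is correct and follows essentially the same route as the paper: apply It\^o to $P^{Y,x}(t,Y_t)$, kill the drift to identify $d\langle Y\rangle_t=\rho^2(t,Y_t)\,dt$, show that the Dupire forward relation $C_t=\tfrac12\rho^2\,C_{xx}$ holds for $Y$, and compare with the same relation for $X$ to conclude $\rho^2=\sigma^2$. The two places where you deviate are both refinements rather than genuinely different ideas. First, where the paper simply asserts that the representation $d\langle Y\rangle_t=\rho^2(t,Y_t)\,dt$ ``implies'' the forward relation $C_t=\tfrac12\rho^2\,C_{xx}$, you derive it cleanly via Tanaka and the occupation density formula; this is a rigorous justification of exactly that implication and, as you note, avoids dividing by $P^{Y,x}_{zz}$ at that step. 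Second, where the paper writes ``$Y$ may be represented as in \eqref{eq3.3add} with $\sigma$ replaced by $\rho$'' and concludes equality in law, you make explicit the Dambis--Dubins--Schwarz construction of a driving Brownian motion and then invoke the uniqueness granted by Assumption~\ref{Ass1} (via Yamada--Watanabe, strong uniqueness yields uniqueness in law among weak solutions). So your argument is the paper's argument with two implicit steps spelled out.
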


\begin{proof}
As the function $(t,z) \to P^{Y,x}(t,z)$ is {\bf assumed} to satisfy the standard It\^o conditions we may apply It\^o's  formula to obtain
\begin{equation}\label{Ito}
dP^{Y,x}(t,Y_t) = P_t^{Y,x}(t,Y_t) \, dt  + P_z^{Y,x}(t,Y_t)\, dY_t + \frac{1}{2} P_{zz}^{Y,x}(t,Y_t)\, d\langle Y {\rangle}_t,
\end{equation}
where $\langle Y {\rangle}_t$ denotes the quadratic variation process of the continuous, square integrable martingale $Y$ (see 
\cite[Theorem 7.6.4]{Ku06}).
By \eqref{condadd'} the process $(P^{Y,x}(t,Y_t))_{0 \le t \le 1}$ is a martingale. The martingale condition implies that the drift term vanishes so that the equality $\frac{1}{2} P_{zz}^{Y,x}(t,Y_t) \, d\langle Y {\rangle}_t = - P_t^{Y,x}(t,Y_t)\, dt$ holds true in the following sense:
\begin{equation}\label{mart}
\frac{1}{2} \int_A  P_{zz}^{Y,x}(t,Y_t)\, d\langle Y {\rangle}_t = - \int_A P_t^{Y,x}(t,Y_t)\, dt,
\end{equation}
whenever we integrate over a predictable set $A \subseteq [0,T] \times \mathcal C[0,T]$. In particular, the finite measure $d\langle Y {\rangle}_t$ on the predictable sigma-algebra of subsets of $[0,T] \times \mathcal C[0,T]$ has to be absolutely continuous with respect to the measure $d\langle X {\rangle}_t = \sigma^2(t,\omega_t)\, dt$. Here we have used the assumption that $t \to C(t,z)$ is strictly increasing in $t$ so  that $\frac{\sigma^2(t,z)}{2} =  \frac{C_t(t,z)}{ C_{zz}(t,z)}$ is strictly positive. We thus may define the function 
\[ \frac{\rho^2 (t,z)}{2} := \frac{P_t^{Y,x}(t,z)}{P_{zz}^{Y,x}(t,z)} \]
so that 
$d\langle Y {\rangle}_t =  \rho^2 (t,Y_t) \, dt$. 
We therefore must have that $Y$ may be represented as in (\ref{eq3.3add}), with $\sigma$ replaced by $\rho$. This implies the relation
\begin{equation}\label{forw}
 C_t(t,x) = \frac{\rho^2(t,x)}{2} { C_{xx}(t,x)} .
 \end{equation}
Comparing with (\ref{eq3.4add}) we obtain $\rho^2 = \sigma^2$ which shows the identity of $X$ and $Y$ in distribution.
\end{proof}

The above theorem provides a sufficient set of regularity assumptions to substantiate the statement in Dupire's paper \cite{Du94}: ``...we can recover, up to technical regularity  assumptions,  a  unique  diffusion  process''.

Of course, one could do some massaging of the above argument to somewhat weaken the very strong Assumptions \ref{Ass1} which we have imposed. But there is a long and thorny road, going far beyond simple cosmetic changes, to arrive at Lowther's  Theorem \ref{uniqueness theorem}.

In Theorem \ref{baby} the strong regularity assumptions implied in particular the strong Markov property of the process $X$. We stress once more that, in the setting of Lowther's theorem \ref{uniqueness theorem}, the \textit{strong} Markov property is the key assumption.
\vspace{1em}

Passing to Lowther's notation and looking at \eqref{condadd}, a crucial step in the above argument is to start from a convex, increasing, and Lipschitz-one function $g(z)$, such as $g(z)=(z-x)_+$, to its conditional expectations 
\begin{equation} \label{conda}
f(t,z) = \E [g(Y_T)|Y_t=z],   \qquad 0 \le t\le T,\, z \in \R.
\end{equation}
In order to start a chain of arguments one has to verify that $f(t,z)$ is a ``nice'' function.
When looking at Example \ref{easy} and its variants we have seen that in this case, for $g(z)=(z-x)_+$, this is not at all the case. 
Its conditional expectation $f(t,z)$ lacked each of the following desired properties: continuity, monotonicity and convexity in $x$. 

Contrary to this lamentable breakdown of regularity, we shall verify in Corollary \ref{corr} that the strong Markov property guarantees that the following three properties  are inherited from $g(\cdot)$ to each $f(t,\cdot)$: convexity, monotonicity, and $1$-Lipschitz continuity (which serves as a more quantitative version of continuity). This preservation of regularity is a decisive feature of Lowther's proof.

%
%
%
%
%
%

\section{Coupling strong Markov processes}

What is the salient property which distinguishes the strong Markov property from the Markov property in our context? While the former condition allows for Lowther's uniqueness theorem to holde true, in Example \ref{easy} we have seen that there may be different continuous Markov martingales inducing the same marginals. The following well-known concept is the key to understanding the difference.

\begin{definition}\label{monotone}
For probability measures $\pi^1$ and $\pi^2$ on $ \R$, we say that $\pi^2$ dominates $\pi^1$ of first order if, for every $a \in \R$, we have $\ \pi^2[a, \infty[\, \ge \pi^1[a, \infty[$.
\end{definition}
 
 We shall show in the next proposition that the strong Markov property of a continuous martingale implies that the transition probabilities $(\pi_x^{s,t})_{x\in \R}$
 \begin{equation}\label{condpro}
 \pi_x^{s,t}[A] = \P [X_t \in A | X_s = x],
 \end{equation}
where $s<t$ and $A$ is a Borel set in $\R$, are increasing of first order in the variable $x$, for every $s<t$.
%
%
%
%
%
%
%
%

We follow Hobson who applied a well-known technique, namely the ``joys of coupling'' (to quote his paper \cite{Ho98c}) in the present context.

\begin{proposition}\label{Hobson}
Let $X = (X_t)_{0 \le t \le T}$ be a continuous strong Markov process with transition probabilities $ \pi_x^{s,t}[ \cdot]$. Then, for $0 \le s \le t \le T$, and for $x < y$ the probability $\pi_y^{s,t}$ dominates  $\pi_x^{s,t}$ of first order.
\end{proposition}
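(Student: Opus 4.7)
The plan is to construct a coupling $(X^x_u, X^y_u)_{s \le u \le T}$ on some probability space such that $X^x$ has law $\pi_x^{s,\cdot}$, $X^y$ has law $\pi_y^{s,\cdot}$, and almost surely $X^x_u \le X^y_u$ for every $u \in [s,T]$. Once this is done, taking $u=t$ and applying the coupling to the indicator of $[a,\infty)$ immediately gives
\[
\pi_y^{s,t}[a,\infty) = \P(X^y_t \ge a) \ge \P(X^x_t \ge a) = \pi_x^{s,t}[a,\infty),
\]
which is the assertion of the proposition.

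To build the coupling, I would start with two independent copies $\tilde X^x$ and $\tilde X^y$ of the strong Markov process, starting at time $s$ from $x$ and $y$ respectively, with $x < y$. Define the first meeting time
\[
\tau := \inf\{u \in [s,T] : \tilde X^x_u = \tilde X^y_u\},
\]
with the convention $\tau = +\infty$ if the paths never meet. Since $\tilde X^x$ and $\tilde X^y$ are both path-continuous and start with $\tilde X^x_s = x < y = \tilde X^y_s$, continuity of sample paths forces $\tilde X^x_u < \tilde X^y_u$ for all $s \le u < \tau$: any crossing would necessitate an earlier equality by the intermediate value theorem, contradicting the definition of $\tau$. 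Now define the coupled pair on $[s,T]$ by
\[
(X^x_u, X^y_u) :=
\begin{cases}
(\tilde X^x_u, \tilde X^y_u), & s \le u \le \tau,\\
(Z_u, Z_u), & \tau \le u \le T,
\end{cases}
\]
where $Z$ is a single trajectory drawn from the conditional law of $X$ given $X_\tau = \tilde X^x_\tau = \tilde X^y_\tau$. The strong Markov property is precisely what lets us glue $Z$ onto both paths consistently: at the stopping time $\tau$, the future law of $X$ depends only on $X_\tau$, which coincides for the two copies, so both $X^x$ and $X^y$ retain the law $\pi_x^{s,\cdot}$ and $\pi_y^{s,\cdot}$ respectively. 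By construction $X^x_u \le X^y_u$ on $[s,\tau)$ (strict inequality) and $X^x_u = X^y_u$ on $[\tau,T]$, which yields the desired domination.

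The main obstacle, and the place where the assumption of the \emph{strong} Markov property is indispensable, is the splicing step at the random time $\tau$: without the strong Markov property one cannot in general assert that pasting the common continuation $Z$ onto the paths preserves the marginal laws $\pi_x^{s,\cdot}$ and $\pi_y^{s,\cdot}$. A rigorous implementation requires verifying measurability of $\tau$ (which follows from continuity of paths), the existence of regular conditional distributions for the trajectory after $\tau$ given $X_\tau$, and the standard kernel-based argument which shows that the resulting pasted process has the same finite-dimensional distributions as the original starting from $x$ (respectively $y$). Example \ref{easy} shows why ordinary Markovianity is not sufficient for this step: at a random hitting time one genuinely needs the future to be determined by the present value alone.
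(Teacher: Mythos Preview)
Your argument is correct and follows essentially the same coupling idea as the paper: run two independent copies from $x$ and $y$, couple them at their first meeting time $\tau$, use path continuity to guarantee the ordering before $\tau$, and invoke the strong Markov property to justify the splice. The only difference is cosmetic: the paper keeps $X^y$ untouched and redefines $\tilde X^x$ to follow $X^y$ after $\tau$, whereas you introduce a fresh common continuation $Z$ for both; the paper's asymmetric version avoids having to check that the law of $X^y$ is preserved, since that process is never modified.
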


\begin{proof}
Fix $s, t$ and $ x<y $ as above and let $(X^x_u)_{s \le u \le t}$ and $(X^y_u)_{s \le u \le t}$ be independent copies of the process $X$, starting at $X^x_s=x$ and $X^y_s = y$, both defined on the same filtered probability base. Define the stopping time $\tau$ as the first moment $u$ when the process $X^x_u$ equals the process $X^y_u$, if this happens for some $u \in [s,t]$; otherwise we let $\tau = \infty$.

Define the process $\tilde{X}^x$ by
\[
	\tilde{X}^x_u = 
	\begin{cases} 
		X_u^x & \text{for }s \leq u \leq \tau, \\
		X_u^y & \text{for }\tau < u \leq t.
	\end{cases}	
\]
We clearly have 
\begin{equation}\label{ineq}
\tilde{X^x_t} \le X^y_t,\qquad \text{for all } 0 \le t \le T.
\end{equation}
Indeed, if $\tau = \infty$, the paths of $(X^x_u)_{s \le u \le t} = (\tilde{X}_u^x)_{s \le u \le t}$ and $(X^y_u)_{s \le u \le t}$ never touch, so that we even have a strict inequality by continuity of the processes. If $\tau < \infty$, then $\tilde{X}^x$ and ${X^y}$ have ``joined'' at time $\tau$, and follow the same trajectory. Hence $\tilde{X}_u^x = X^y_u$, for $\tau \le u \le t$.

Inequality \ref{ineq} implies that the law of $X^y_t$ dominates the law of $\tilde{X}_t^x$ in first order.
\end{proof}

\begin{corollary}\label{corr}
Let $X = (X_t)_{0 \le t \le T}$ be a continuous strong Markov process with marginal laws $\mu_t$ and transition probabilities $\pi_x^{s,t}[ \cdot]$. Let $0 \le s \le t \le T$  and $z \mapsto g(z)$ be a measurable $\mu_t$-integrable function, and define the conditional expectation similarly as in (\ref{conda})

\begin{equation} \label {condad}
f(s,x) = \E [g(X_t)|X_s=x],   \qquad  x \in \R.
\end{equation}
Then the following assertions hold true.

(i) If $z \mapsto g(z)$ is increasing, then so is $x \mapsto f(s,x)$, for every $0 \le s \le t \le T$.

\noindent If, in addition, we assume that $X$ is a martingale, we also have the following two assertions.

(ii) If $z \mapsto g(z)$ is 1-Lipschitz , then so is $x \mapsto f(s,x)$, for every $0 \le s \le t \le T$.

(iii) If $z \mapsto g(z)$ is convex, then so is $x \mapsto f(s,x)$, for every $0 \le s \le t \le T$.

\end{corollary}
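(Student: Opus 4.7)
For all three parts we invoke the monotone coupling of Proposition~\ref{Hobson}: given $x<y$, there is a common filtered probability space carrying coupled processes $\tilde X^{x}\sim X^{x}$ and $X^{y}$ with $\tilde X^{x}_u \le X^{y}_u$ pathwise on $[s,t]$. Part~(i) is then immediate: since $g$ is increasing, $g(\tilde X^{x}_t)\le g(X^{y}_t)$ pointwise, and so $f(s,x)=\E[g(\tilde X^{x}_t)]\le \E[g(X^{y}_t)]=f(s,y)$. For part~(ii), the 1-Lipschitz bound combined with the pathwise ordering yields $|g(X^{y}_t)-g(\tilde X^{x}_t)|\le X^{y}_t-\tilde X^{x}_t$; taking expectations and using $\E[X^{y}_t]=y$ and $\E[\tilde X^{x}_t]=\E[X^{x}_t]=x$ from the martingale property gives $|f(s,y)-f(s,x)|\le y-x$.

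Part~(iii) is more delicate and requires a three-point coupling. Given $x_1<x_2<x_3$ with $x_2=\lambda x_1+(1-\lambda)x_3$, I would iterate the meeting-time construction of Proposition~\ref{Hobson} to place $X^{x_1},X^{x_2},X^{x_3}$ on one probability space with the correct one-dimensional laws and with $X^{x_1}_u\le X^{x_2}_u\le X^{x_3}_u$ pathwise. On $\{X^{x_1}_t<X^{x_3}_t\}$ set $\Lambda_t:=(X^{x_3}_t-X^{x_2}_t)/(X^{x_3}_t-X^{x_1}_t)\in[0,1]$; convexity of $g$ applied to the ordered triple gives
\[
g(X^{x_2}_t)\le \Lambda_t\,g(X^{x_1}_t)+(1-\Lambda_t)\,g(X^{x_3}_t),
\]
and the inequality is trivially an equality on the complement (where all three paths coincide). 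Note that $\Lambda_s=\lambda$ deterministically, and the martingale property delivers the identity
\[
\E[\Lambda_t(X^{x_3}_t-X^{x_1}_t)]=\E[X^{x_3}_t-X^{x_2}_t]=\lambda\,\E[X^{x_3}_t-X^{x_1}_t].
\]
Taking expectations in the pointwise inequality, the task reduces to showing $\E[(\Lambda_t-\lambda)(g(X^{x_1}_t)-g(X^{x_3}_t))]\le 0$, which would then yield $f(s,x_2)\le \lambda f(s,x_1)+(1-\lambda)f(s,x_3)$. I would decompose a convex $g$ into its affine part (absorbed exactly by the identity above) plus an increasing convex remainder, and exploit that $g(X^{x_3}_t)-g(X^{x_1}_t)\ge 0$ on the unmerged event together with the stochastic ordering of (i).

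The main obstacle is exactly this last inequality in~(iii). The weights $\Lambda_t$ in the pathwise convexity bound are random, so the deterministic target weights $\lambda,\,1-\lambda$ cannot simply be factored out; the martingale identity only supplies the orthogonality $\E[(\Lambda_t-\lambda)(X^{x_3}_t-X^{x_1}_t)]=0$, and upgrading this into the signed estimate required for every convex $g$ is where one must genuinely use the coalescence structure of the meeting-time coupling, namely that $\Lambda_t\in\{0,1\}$ precisely when one of the adjacent pairs has already merged and $\Lambda_t\in(0,1)$ otherwise. Parts~(i) and~(ii), by contrast, need only the pairwise coupling and are each essentially one-line consequences of Proposition~\ref{Hobson} combined, for~(ii), with the martingale property.
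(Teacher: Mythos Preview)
Your arguments for (i) and (ii) are fine. For (i) the paper simply says this is a restatement of Proposition~\ref{Hobson}; for (ii) the paper gives a minor variant (observing that $g(z)\pm z$ are monotone and invoking (i)), but your direct estimate via the pathwise ordering and the martingale identity $\E[X^{y}_t-\tilde X^{x}_t]=y-x$ is equally valid.

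Part (iii), however, has a genuine gap, and you have correctly located it yourself. Once you pass to the ordered triple and the random weight $\Lambda_t$, you need the signed inequality
\[
\E\bigl[(\Lambda_t-\lambda)\bigl(g(X^{x_1}_t)-g(X^{x_3}_t)\bigr)\bigr]\le 0,
\]
and nothing in your outline delivers it. Your proposed decomposition ``affine plus increasing convex'' is not available for a general convex $g$ (think of $g(z)=|z|$), and even for a call payoff $g(z)=(z-k)_+$ the inequality is a nontrivial correlation statement between $\Lambda_t$ and $g(X^{x_3}_t)-g(X^{x_1}_t)$ that does not follow from the pairwise ordering alone. The martingale identity you record only gives the single orthogonality $\E[(\Lambda_t-\lambda)(X^{x_3}_t-X^{x_1}_t)]=0$, which is exactly the linear case and does not extend.

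The paper avoids the random-weight problem altogether by working with \emph{independent} (not coupled) copies $X,Y,Z$ starting from $x<y<z$ and using convexity in the multiplicative form
\[
(Z_t-X_t)\,g(Y_t)\le (Z_t-Y_t)\,g(X_t)+(Y_t-X_t)\,g(Z_t).
\]
This inequality holds pointwise on the event $\{\tau=t\}$ where the three processes have not yet met (so $X_t<Y_t<Z_t$), and on the meeting events $\{\tau=\tau^x\}$, $\{\tau=\tau^z\}$ the expression has zero conditional expectation by the strong Markov property, since at the meeting time the two equal processes become exchangeable and the expression is antisymmetric under their swap. Summing the three pieces and \emph{then} using independence together with the martingale property factors the expectations into the deterministic weights $(z-x),(z-y),(y-x)$, yielding \eqref{convex} directly. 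The point is that by keeping the copies independent one never has to compare random weights to deterministic ones; the price is handling the events where the order could break, and these are disposed of by symmetry.
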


\begin{proof}
Assertion (i): this is just a reformulation of Proposition \ref{Hobson}.

Assertion (ii): If $g$ is 1-Lipschitz, then $g(x) +x$ is increasing. As $X$ is a martingale we have $\E [X_t|X_s=x] = x$. By (i) $f(s,x) + x$ is increasing. By the same token $f(s,x) - x$ is decreasing which readily shows that $x \to f(s,x)$ is Lipschitz.

Assertion (iii): We follow the proof of \cite[Theorem 3.1]{Ho98c}.
For convex $g$ and $x < y < z$ we have to show that
\begin{equation}\label{convex}
(z-x) f(s,y) \le (z-y)f(s,x) + (y-x) f(s,z).
\end{equation}

Fix $x < y < z$ and choose three independent copies $X^x, X^y,X^z$ of the process $X$, starting at time $s$ from the initial values $x,y$, and $z$. To simplify notation we denote the resulting triple of processes $\left(  {X^x},X^y,{X^z} \right)$ by $(X,Y,Z)$. We define coupling times
similarly as above. Let $\tau^x$ be the first moment $u > s$ when $X(u) = Y(u)$; similarly $\tau^z$ is defined as the first moment when $Y$ and $Z$ meet. Finally, let $ \tau = \tau^x \wedge \tau^z \wedge t$. 
This time we leave the processes unchanged; rather we shall argue on the three disjoint (up to null sets) sets $\{\tau = \tau^x\}$, $\{\tau = \tau^z\}$, and $\{\tau = t\}$.

We start with the latter on which we have $X_t < Y_t < Z_t$.
By the convexity of $g$ we have
\begin{equation}\label{convexrandom}
(Z_t-X_t) g(Y_t) \le (Z_t-Y_t) g(X_t) + (Y_t-X_t) g(Z_t).
\end{equation}
so that
\begin{equation}\label{convexexp}
\E\left[(Z_t-X_t) g(Y_t) - \left( (Z_t-Y_t) g(X_t) + (Y_t-X_t)g(Z_t)\right);\tau = t\right] \le 0.
\end{equation}

On $\{\tau = \tau^x\}$ we have $X_t = Y_t $ so that the last term in inequality (\ref{convexrandom}) vanishes. Also the first and the middle term are equal so that (\ref{convexrandom}) holds true (with equality) on the set $\{\tau = \tau^x\}$. In particular,
\begin{equation}\label{convexexp1}
\E\left[(Z_t-X_t) g(Y_t) - \left( (Z_t-Y_t) g(X_t) + (Y_t-X_t)g(Z_t)\right);\tau = \tau^x\right] \le 0.
\end{equation}
The same reasoning applies to $\{\tau = \tau^z\}$. 
\begin{equation}\label{convexexp2}
\E\left[(Z_t-X_t) g(Y_t) - \left( (Z_t-Y_t) g(X_t) + (Y_t-X_t)g(Z_t)\right);\tau = \tau^z\right] \le 0.
\end{equation}
Summing over these three sets we obtain
\begin{equation}\label{convexexp3}
\E\left[(Z_t-X_t) g(Y_t) - \left( (Z_t-Y_t) g(X_t) + (Y_t-X_t)g(Z_t)\right)\right] \le 0.
\end{equation}
Finally we use independence and the martingality of $X,Y$, and $Z$ to obtain
\begin{equation}\label{convex1}
(z-x) \E[g(Y_t)|Y_s=y] \le  (z-y)\E[g(X_T|X_s=x] + (y-x)\E[g(Z_t)|Z_s=z],
\end{equation}
which is tantamount to (\ref{convex}).
\end{proof}


We can reformulate the message of Corollary \ref{corr} (ii) in the spirit of Bachelier by considering the Wasserstein cost $\mathcal{W}_1 (\pi_x^{s,t}[ \cdot] , \pi_y^{s,t}[ \cdot])$ of the horizontal transport of the conditional probability measures $ \pi_x^{s,t}[ \cdot]$ to $ \pi_y^{s,t}[ \cdot]$.

Recall that, for probabilities $\mu, \nu$ on the real line the Wasserstein-1 distance is given by 
\[ \mathcal{W}_1(\mu, \nu):=\inf_{\pi \in \text{cpl}(\mu, \nu)} \int |x-y|\, d\pi(x,y), \]
where $\text{cpl}(\mu, \nu)$ denotes the set of all probabilities on $\R^2$ having $\mu, \nu$ as marginal measures, see e.g.\ \cite{Vi09} for an extensive overview of the field of optimal transport. 

\begin{definition}
Let $\pi$ be a probability on  $\R^2 $ and write $\mu$ for its projection onto the first coordinate and  $(\pi_x)_x$ for the respective disintegration so that $\pi = \int_\R \pi_x d\mu(x)$. Then $\pi$ is called a Lipschitz-kernel if for all $x, y $ in a set $X$ with  $\mu(X)=1$ 
\begin{align}\label{kLip}
\mathcal W_1(\pi_x, \pi_{y})\leq |x-y|.
\end{align} 
\end{definition} 
We call $\pi$ a \emph{martingale coupling} if $\int y\, d\pi_x(y)=x$, $\mu$-a.s. It is then straight-forward to see that for a martingale coupling $\pi$ the following are equivalent: 
\begin{enumerate}
\item $\pi$ is a Lipschitz-kernel.
\item for all $x, y $ in a set $X$ with $\mu(X)=1$ we have
$
\mathcal W_1(\pi_x, \pi_{y})= |x-y|.
$
\item for all $x, y$ in a set $X$ with $\mu(X)=1$, $x\leq y$ the measure $\pi_x$ is dominated in first order by $\pi_{y}$.  
\end{enumerate}

\begin{definition}\label{Wass} 
Let $X$ be an $\R$-valued Markovian process. Then X has the Markov Lipschitz property if for all $s \leq t$, the law of $(X_s, X_t)$ is a Lipschitz kernel. 
\end{definition}
To give yet another characterisation of Lipschitz-Markov processes, recall that a process $X$ is Markov if and only if for all $s\leq t$ and every bounded measurable function $f$ there is a measurable function $g$ such that 
$$\E[f(X_s)|\mathcal F_s]= g(X_t).$$ 
A process $X$ is Lipschitz-Markov if and only if for all $s\leq t$ and every 1-Lipschitz function $f$ there is a 1-Lipschitz function $g$ such that 
$$\E[f(X_s)|\mathcal F_s]= g(X_t).$$
This is a straightforward consequence of the Kantorovich-Rubinstein theorem that provides a dual characterization of the Wasserstein-1 distance through 1-Lipschitz functions. 

We now can resume the crucial role of the strong Markov property

\begin{corollary}\label{SLM}
Let $M$ be a continuous Markov martingale. Then $M$ is Lipschitz-Markov if and only if it is strong Markov. 
\end{corollary}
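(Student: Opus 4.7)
The two directions have rather different character. The backward implication (strong Markov $\Rightarrow$ Lipschitz-Markov) is essentially a repackaging of Proposition \ref{Hobson}: for a continuous strong Markov process and $x\leq y$, the kernel $\pi^{s,t}_y$ dominates $\pi^{s,t}_x$ in first order. Since $M$ is a martingale, the joint law of $(M_s,M_t)$ is automatically a martingale coupling, so the equivalence (3)$\Leftrightarrow$(1) for martingale couplings recorded just before Definition \ref{Wass} shows that $(M_s,M_t)$ is a Lipschitz-kernel. Hence $M$ is Lipschitz-Markov.

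The forward implication is the substantive one, and I would attack it by approximation. Fix a bounded $1$-Lipschitz test function $f$. By the Kantorovich--Rubinstein characterisation of the Lipschitz-Markov property recorded just before the corollary, for each deterministic $t\leq T$ there is a $1$-Lipschitz function $g_t\colon\R\to\R$ with $\E[f(M_T)\mid\mathcal F_t]=g_t(M_t)$ almost surely; after passing to the unique $1$-Lipschitz extension from $\operatorname{supp}(\mu_t)$ to $\R$, $g_t$ is well-defined as a function on $\R$. Given a stopping time $\tau\leq T$, set $\tau_n=2^{-n}\lceil 2^n\tau\rceil$, partition the probability space by the (countably many) values of $\tau_n$, and invoke the ordinary Markov property at each deterministic value $k/2^n$ to conclude $\E[f(M_T)\mid\mathcal F_{\tau_n}]=g_{\tau_n}(M_{\tau_n})$.

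Now pass to the limit. As $n\to\infty$, the left-hand side converges a.s.\ to $\E[f(M_T)\mid\mathcal F_\tau]$ by right-continuity of the (augmented) filtration and martingale convergence, while $M_{\tau_n}\to M_\tau$ by path continuity. It remains to identify $\lim_n g_{\tau_n}(M_{\tau_n})$ with $g_\tau(M_\tau)$. Splitting
\[ g_{\tau_n}(M_{\tau_n})-g_\tau(M_\tau) = \bigl[g_{\tau_n}(M_{\tau_n})-g_{\tau_n}(M_\tau)\bigr] + \bigl[g_{\tau_n}(M_\tau)-g_\tau(M_\tau)\bigr], \]
the first bracket vanishes by the uniform Lipschitz bound and $M_{\tau_n}\to M_\tau$. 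For the second bracket one wants $t\mapsto g_t(x)$ to be (right-)continuous. To that end I would argue that for deterministic $t_n\downarrow t$, the identity $g_{t_n}(M_{t_n})=\E[f(M_T)\mid\mathcal F_{t_n}]\to g_t(M_t)$ a.s., combined with $|g_{t_n}(M_{t_n})-g_{t_n}(M_t)|\leq|M_{t_n}-M_t|\to 0$, gives $g_{t_n}(M_t)\to g_t(M_t)$, hence pointwise $\mu_t$-a.e.\ convergence $g_{t_n}\to g_t$. Arzel\`a--Ascoli, powered by uniform 1-Lipschitzness, upgrades this to locally uniform convergence. Feeding this back yields $\E[f(M_T)\mid\mathcal F_\tau]=g_\tau(M_\tau)$, which is $\sigma(\tau,M_\tau)$-measurable. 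Applied to products $f_1(M_{t_1})\cdots f_k(M_{t_k})$ with $t_i>\tau$, via iteration at the stopping times $\tau+s$, this extends to the full strong Markov property.

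The hard part, in my opinion, is making the joint continuity of $(t,x)\mapsto g_t(x)$ precise. Each $g_t$ is canonically defined only on $\operatorname{supp}(\mu_t)$, whose variation with $t$ need not be transparent, so one must choose $1$-Lipschitz extensions coherently in $t$ and relate the pointwise convergence on varying supports to the value $g_\tau(M_\tau)$ at the random point $M_\tau$. The uniform $1$-Lipschitz bound handed over for free by the hypothesis is exactly what powers the Arzel\`a--Ascoli step; its careful deployment, together with a judicious handling of the measurability issues in ``$g_{\tau_n}(M_{\tau_n})\to g_\tau(M_\tau)$'', is where the bulk of the analytical work underlying Lowther's theorem resides.
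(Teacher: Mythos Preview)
Your backward direction matches the paper exactly. For the forward direction, the paper simply observes that the argument is the standard one for Feller processes and cites a textbook reference; your approximation of $\tau$ by dyadic stopping times, followed by a passage to the limit powered by the equicontinuity that the uniform $1$-Lipschitz bound supplies, \emph{is} precisely that Feller-style argument spelled out. So the two approaches coincide, with yours giving the details the paper leaves implicit.

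One remark on your closing paragraph: you overstate the difficulty and misattribute it. The equivalence in Corollary~\ref{SLM} is a comparatively soft fact---it is the Feller mechanism, nothing more---and is \emph{not} where the analytical work behind Lowther's Theorem~\ref{uniqueness theorem} lies. That theorem goes much further: granted the regularity of $x\mapsto f(s,x)$ from Corollary~\ref{corr}, one still has to deduce that two continuous strong Markov martingales with the same one-dimensional marginals have the same law, and this step (handled in the paper only via the toy Theorem~\ref{baby} under strong smoothness hypotheses) is genuinely deep in full generality. The joint-continuity issue you flag for $(t,x)\mapsto g_t(x)$ is a routine technicality in the Feller argument, not the heart of Lowther's result.
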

\begin{proof} 
By Proposition \ref{Hobson} / Corollary \ref{corr} every strong Markov-martingale is Lipschitz Markov.  
That a Lipschitz Markov martingale is strongly Markov is proved in the same way as one establishes the strong Markov property for Feller processes. See e.g.\ \cite[Theorem 1.68]{Li10}. \end{proof}

To the best of our knowledge, Lipschitz kernels play a crucial role in all known proofs of Kellerer's theorem. The decisive property is the following: 
\begin{proposition}\label{LMclosed}
Consider the space of $\mathcal P(\mathcal D[0,1])$ of probability measures on the Skorokhod space equipped with the convergence of finite dimensional distributions.  Then the set of Lipschitz-Markov martingales is closed. 
\end{proposition}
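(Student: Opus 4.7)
The strategy is to exploit the dual characterisation stated just before the proposition: a process $X$ is Lipschitz--Markov iff, for every $s\le t$ and every 1-Lipschitz $f:\R\to\R$, there is a 1-Lipschitz function $g$ with $\E[f(X_t)\mid\mathcal F_s]=g(X_s)$. This converts the $\mathcal W_1$-estimate on the transition kernels into an equi-Lipschitz family of real-valued functions, for which Arzel\`a--Ascoli is available.

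Let $(P^n)\subset\mathcal P(\mathcal D[0,1])$ be laws of Lipschitz--Markov martingales converging in finite-dimensional distributions to $P$, and write $M^n,M$ for the respective coordinate processes. Fix $s<t$ and a bounded 1-Lipschitz $f$; the dual characterisation supplies 1-Lipschitz $g^n$ with $\|g^n\|_\infty\le\|f\|_\infty$ such that, for every $s_1<\cdots<s_k=s$ and every bounded continuous $H$,
\[
\E^{P^n}\bigl[f(M_t)\,H(M_{s_1},\dots,M_s)\bigr]=\E^{P^n}\bigl[g^n(M_s)\,H(M_{s_1},\dots,M_s)\bigr].
\]
Arzel\`a--Ascoli yields a subsequence along which $g^n\to g$ uniformly on compact sets, with $g$ again 1-Lipschitz. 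Tightness of $\mathrm{Law}^{P^n}(M_s)$ together with the uniform convergence of $g^n$ on compacts allows replacement of $g^n(M^n_s)$ by $g(M^n_s)$ with vanishing error; convergence of finite-dimensional distributions then transports both sides to the limit, giving $\E^P[f(M_t)H(M_{s_1},\dots,M_s)]=\E^P[g(M_s)H(M_{s_1},\dots,M_s)]$. This identity proves the Markov property of $M$ (the right-hand side depends on $M_s$ alone) and, via Kantorovich--Rubinstein, that its transition kernel is 1-Lipschitz. Uniqueness of $g$ as a conditional expectation removes dependence on the subsequence.

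The delicate step is the martingale property, since the identity function is unbounded. I would apply the construction to the truncations $f_R(y):=(y\wedge R)\vee(-R)$, obtaining 1-Lipschitz limits $g_R$, and use the martingale identity $\E^{P^n}[M_t\mid M_s=x]=x$ to estimate
\[
|g^n_R(x)-x|=\bigl|\E^{P^n}[f_R(M_t)-M_t\mid M_s=x]\bigr|\le \E^{P^n}\bigl[(|M_t|-R)_+\bigm| M_s=x\bigr].
\]
The Lipschitz--Markov bound $\mathcal W_1(\pi^{n,s,t}_x,\pi^{n,s,t}_0)\le|x|$ reduces the required uniform tail control to uniform integrability of the fibres $\pi^{n,s,t}_0$, which in turn propagates from uniform integrability of the one-dimensional marginals $\mu^n_t$. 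Verifying this uniform integrability is, in my view, the main obstacle, and is precisely the place where the Lipschitz--Markov hypothesis (rather than mere Markovianity) is essential. Once established, sending first $n\to\infty$ and then $R\to\infty$ in $g^n_R(M^n_s)=\E^{P^n}[f_R(M_t)\mid M_s]$ yields $\E^P[M_t\mid M_s]=M_s$, completing the proof that the limit lies in the class of Lipschitz--Markov martingales.
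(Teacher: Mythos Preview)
The paper does not prove this proposition; it simply refers to \cite{BeHuSt16} for what it calls a ``simple'' proof. So there is no in-paper argument to compare against.

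Your Arzel\`a--Ascoli route for the Lipschitz--Markov part is sound and is essentially the standard argument. You are also right that the martingale property is the delicate step. However, your proposed resolution is wrong: the Lipschitz--Markov hypothesis does \emph{not} furnish the uniform integrability you need, and in fact the proposition as literally stated (with no integrability assumption across the sequence) is false. Take $U$ uniform on $[-1,1]$ and, independently, $Z^n$ with $\P(Z^n=n-1)=1/n$, $\P(Z^n=-1)=1-1/n$; set $M^n_t=U$ for $t<1$ and $M^n_1=U+Z^n$. Each $M^n$ is a Lipschitz--Markov martingale (the only nontrivial kernel is $\pi_x=\delta_x*\mathrm{Law}(Z^n)$, for which $\mathcal W_1(\pi_x,\pi_y)=|x-y|$), yet the f.d.d.\ limit has $M_t=U$ for $t<1$ and $M_1=U-1$, which is not a martingale. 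So no amount of kernel-Lipschitz control can rescue the martingale identity in general.

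What actually makes the proof ``simple'' in the paper's setting (and in \cite{BeHuSt16}) is that the approximating martingales $M^S$ all share the \emph{same} one-dimensional marginals $(\mu_t)$ coming from the peacock. Uniform integrability is then automatic, and passing $\E^{P^n}[M_tH]=\E^{P^n}[M_sH]$ to the limit for bounded continuous $H$ is immediate by truncation. Your instinct that something extra is needed is correct; the extra ingredient comes from the ambient hypothesis of prescribed marginals, not from the Lipschitz--Markov property itself.
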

In contrast, the set of Markov-martingales is not closed. 
See e.g.\ \cite{BeHuSt16} for the (simple) proof of Proposition \ref{LMclosed}.

\section{Continuity of the martingale solution}

An important question in the present context is the following: under which conditions on a peacock $(\mu_t)_{0\le t \le 1}$, as defined in Section 4 above, is there a 
strong Markov martingale with \textit{continuous trajectories} having the given marginals. We only focus on the one-dimensional case and mention that the corresponding question for higher dimensions remains wide open. The one-dimensional case, however, is fully understood by now, again by the definitive work of Lowther \cite[Theorem 1.3]{Lo08b}:
\begin{theorem}[Lowther]
Let $(\mu_t)_{t\geq 0}$ be a peacock and assume that $t\mapsto \mu_t$ is weakly continuous and that each $\mu_t$ has convex support. Then there exists a unique strongly continuous Markov martingale $X$ such that such that $X_t\sim \mu_t, t\geq 0$. 
\end{theorem}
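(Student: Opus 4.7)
Uniqueness is essentially immediate from Theorem \ref{uniqueness theorem}: once existence is established, any two continuous strong Markov martingales with identical one-dimensional marginals must agree in joint law. By Corollary \ref{SLM}, existence of a continuous \emph{strong} Markov martingale matching $(\mu_t)$ is equivalent to existence of a continuous \emph{Lipschitz}-Markov martingale matching $(\mu_t)$, and this is the formulation I would work with.

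My existence plan is the classical discretise-and-pass-to-the-limit strategy underlying Kellerer-type theorems, leveraging Proposition \ref{LMclosed}. On the dyadic grid $D_n = \{k2^{-n} : 0 \le k \le 2^n\}$ I would first build a discrete-time martingale $(X^n_t)_{t \in D_n}$ by chaining together martingale Lipschitz-kernels between consecutive marginals $\mu_{k2^{-n}}$ and $\mu_{(k+1)2^{-n}}$, which are in convex order by assumption. Such one-step Lipschitz-kernel martingale couplings exist in dimension one because the classical monotone (quantile/shadow) martingale coupling produces conditional laws $(\pi_x)_x$ that are increasing in first-order stochastic dominance in the starting point $x$, which is precisely condition (iii) in the list following \eqref{kLip}. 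Linear interpolation between grid points then yields a continuous process $X^n$ that is Lipschitz-Markov on $D_n$ and carries the correct marginals there. Standard tightness (from the uniform $L^1$ control provided by the peacock property) and diagonal extraction deliver a subsequential limit $X^\infty$ in finite-dimensional distributions; Proposition \ref{LMclosed} ensures the limit is again a Lipschitz-Markov martingale, and weak continuity of $t \mapsto \mu_t$ extends the correct-marginal property from the dense set $\bigcup_n D_n$ to every $t \in [0,1]$.

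The main technical obstacle is that a priori $X^\infty$ is only c\`adl\`ag, and the Cantor-set variant of Example \ref{easy} shows that without convex support of the $\mu_t$ continuity of paths genuinely fails: a non-convex support forces any martingale matching the marginals to jump across the gaps. This is the precise point at which the convex-support hypothesis must be invoked. Quantitatively, a jump of $X^\infty$ across an interval $(a,b) \subseteq \operatorname{supp} \mu_t$ at time $t$ would, by the Lipschitz-kernel property, force nearby starting points to transport a non-negligible amount of mass into $(a,b)$ at times $t + \epsilon$, producing mass there in $\mu_{t+\epsilon}$ and contradicting weak continuity combined with convexity of the supports. Turning this intuition into a uniform modulus-of-continuity estimate for the approximations $X^n$ that survives the limit $n \to \infty$ is where the real work sits, and I would expect this to demand the same delicate interplay between the Lipschitz-kernel structure of the couplings, the martingale property, and the support geometry of the marginals that Lowther develops in \cite{Lo08b}.
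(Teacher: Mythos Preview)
First, a framing point: the paper does \emph{not} prove this theorem in the stated generality. It explicitly says ``We do not show Lowther's theorem in full generality'' and instead proves the weaker Theorem \ref{cont} under the much stronger Assumption \ref{Ass} (densities bounded below on compacta, finite and continuous second moments). So there is no full proof in the paper to compare against; what follows compares your sketch with the paper's argument for its restricted result.

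Your overall architecture---discretise, build Lipschitz-Markov martingales on a grid, pass to the limit via Proposition \ref{LMclosed}, then argue path-continuity separately---matches the paper's. Uniqueness via Theorem \ref{uniqueness theorem} is correct.

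Two concrete problems.

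\textbf{Linear interpolation does not give a martingale.} If you linearly interpolate a discrete-time martingale, then for $t\in(t_k,t_{k+1})$ the endpoint $X_{t_{k+1}}$ is already measurable with respect to the natural filtration at time $t$, so the interpolated process fails the martingale property and is certainly not strong Markov. The paper avoids this via Remark \ref{LMkernelE}: between consecutive grid marginals it inserts a genuine continuous strong Markov martingale (stretched Brownian motion, or a time-changed Root embedding). Having \emph{continuous}-martingale approximants is then essential downstream, since the John--Nirenberg equivalence of $BMO_q$ norms used in Proposition \ref{criterion} requires continuity.

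\textbf{Your continuity heuristic does not yield a contradiction, and the paper's argument is of a completely different nature.} You write that a jump across an interval $(a,b)\subset\operatorname{supp}\mu_t$ would ``produce mass there in $\mu_{t+\epsilon}$, contradicting weak continuity combined with convexity of the supports.'' But $(a,b)$ lies \emph{inside} the support, so mass is already there; nothing is contradicted. The paper's route (under Assumption \ref{Ass}) is analytic rather than geometric: Proposition \ref{criterion} gives a $BMO$-type criterion for continuity of the limit, and Lemma \ref{sublemma} upgrades the averaged bound $\int m_1(\pi_x)\,d\mu_{t_0}(x)\le h^{1/2}$ to the pointwise bound $m_1(\pi_x)\le D h^{1/4}$ by combining the $2$-Lipschitz property of $x\mapsto m_1(\pi_x)$ with the uniform lower bound on the densities. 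Lemma \ref{stop} extends this to stopping times, and a localisation finishes. None of this uses convex support per se; it uses the quantitative density hypothesis, which is exactly why the paper cannot reach Lowther's full statement. Your instinct that convex support is the correct minimal hypothesis is right, but converting it into a proof is the substance of \cite{Lo08b} and is not supplied by your sketch.
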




We do not show Lowther's theorem in full generality, but again we want to isolate a sufficient set of assumptions that allows us to present a (comparably simple) self-contained proof of the continuity theorem. 

\begin{remark}\label{LMkernelE}
A key ingredient of the proof is that for  probabilities $\mu, \nu$ in convex order there exists a continuous martingale $(X_t)_{0 \le t \le 1}$, $X_0 \sim \mu$, $X_1\sim \nu$  which is strongly Markovian (and hence Lipschitz Markov). For instance we can take $X$ to be a \emph{stretched Brownian motion}, that is, a solution to a continuous time martingale transport problem, see \cite{BaBeHuKa20}. Another possibility would be to apply an appropriate deterministic time change to Root's solution \cite{Ro69} (see \cite{CoWa12} for the case of a non-trivial starting distribution) of the Skorokhod embedding problem. We note that the martingale transport approach is also applicable to measures $\mu, \nu$ defined on $\R^d, d>1$. This could be interesting in view of a possible multi-dimensional  extension of Lowther's theorem but this is not within the scope of the present article.
\end{remark}

\begin{assumption}\label{Ass}

Let $ (\mu_t)_{0 \le t \le 1}$ be a one-dimensional peacock centered at zero, with densities $p_t(x)$ and finite second moments $m^2_2(\mu_t) = \int_{-\infty}^\infty x^2 \, d\mu_t(x) = \int_{-\infty}^\infty x^2 p_t(x)\, dx$ such that the function $ t \to m^2_2(\mu_t)$ is continuous. 

We assume that there is a -- bounded or unbounded -- open interval $I \subset \R$ supporting each $\mu_t$ such that, for each compact subset $ K \subset I$, the Lebesgue densities $p_t(x)$ of $\mu_t$ are bounded away from zero, uniformly in $x \in K$ and $t \in [0,1]$.


\end{assumption}

It will be convenient to suppose (w.l.g.\ via a deterministic time change) that $ t \to m^2_2(\mu_t)$ is affine. More precisely we may assume that $m^2_2(\mu_{t+h}) - m^2_2(\mu_t)= h$ so that for all martingales $M$ with $law( M_t) = \mu_t$ and $law( M_{t+h} )= \mu_{t+h}$

\begin{equation} \label{1/n}
\E \left[\left(M_{t+h} - M_t \right)^2\right] = h.
\end{equation}

\begin{theorem}\label{cont}
Under the above assumptions there is a \underline{continuous} strong Markov martingale $M = (M_t)_{0 \le t \le 1}$ with given marginals $ (\mu_t)_{0 \le t \le 1}$.
\end{theorem}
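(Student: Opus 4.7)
My plan is to construct $M$ as a subsequential limit of piecewise Lipschitz--Markov martingales and then upgrade the limit to a continuous strong Markov process using the density lower bound of Assumption \ref{Ass}. For each $n \ge 1$ I fix the dyadic partition $t^n_k = k/n$, $0 \le k \le n$. Because $(\mu_t)_t$ is a peacock, each consecutive pair $(\mu_{t^n_k}, \mu_{t^n_{k+1}})$ is in convex order, so Remark \ref{LMkernelE} provides a continuous Lipschitz--Markov martingale on $[t^n_k, t^n_{k+1}]$ (for instance a stretched Brownian motion) interpolating them. Gluing these pieces by Markov splicing at the partition points produces a continuous martingale $M^{(n)} = (M^{(n)}_t)_{t \in [0,1]}$ with $M^{(n)}_{t^n_k} \sim \mu_{t^n_k}$. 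The Lipschitz--Markov property survives the gluing, since composition of Lipschitz kernels is again a Lipschitz kernel: the pullback of a $1$-Lipschitz function through a finite chain of $1$-Lipschitz conditional expectations is $1$-Lipschitz.

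Next I collect $\{\mathrm{law}(M^{(n)})\}_n$ in $\mathcal P(\mathcal D[0,1])$. Since every marginal satisfies $\E[(M^{(n)}_t)^2] = m_2^2(\mu_t)$, which is bounded on $[0,1]$, the finite-dimensional distributions form a tight family. I extract a subsequential limit $M$ in the topology of convergence of finite-dimensional distributions. By Proposition \ref{LMclosed} this limit is itself a Lipschitz--Markov martingale. Its marginals match $\mu_t$ at every dyadic time by construction, and by weak continuity of $t \mapsto \mu_t$ (which follows from the continuity of the second moment together with convex order) this extends to every $t \in [0,1]$.

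The crux, and the step I expect to be the main obstacle, is showing that $M$ admits a continuous modification: finite-dimensional convergence alone does not preclude jumps in the limit. Here the density lower bound of Assumption \ref{Ass} enters decisively. On each compact $K \subset I$ one has $\inf_{t \in [0,1], x \in K} p_t(x) \ge c_K > 0$. Combined with the Lipschitz--Markov inequality $\mathcal W_1(\pi_x^{s,t}, \pi_y^{s,t}) \le |x-y|$ and the normalisation \eqref{1/n}, this should force the transition kernels $\pi_x^{s,t}$ to concentrate on a window of size $O(\sqrt{t-s})$ about $x$ for $x$ in the bulk of $I$; otherwise the mass transported from $\mu_s$ would create an atom or an overly narrow spike in $\mu_t$, violating the density lower bound. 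From such a quantitative estimate I would derive a bound $\E[(M_t - M_s)^4] \le C(t-s)^2$ and apply the Kolmogorov--Chentsov criterion to obtain a continuous modification. Once continuity is in hand, Corollary \ref{SLM} promotes the continuous Lipschitz--Markov martingale $M$ to a strong Markov one, completing the proof.
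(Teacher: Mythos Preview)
Your overall scaffolding matches the paper's: one constructs continuous Lipschitz--Markov martingales $M^{(n)}$ matching the marginals on a finite grid (Remark~\ref{LMkernelE}), passes to a f.d.d.\ limit $M$ which remains a Lipschitz--Markov martingale (Proposition~\ref{LMclosed}), and then identifies continuity of $M$ as the only real issue, with Corollary~\ref{SLM} giving the strong Markov property at the end. So the architecture is correct.

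The continuity step, however, is where your sketch has genuine gaps, and the paper's resolution differs substantially from what you propose. First, the density lower bound of Assumption~\ref{Ass} is only available on compacts $K\subset I$, so no \emph{global} moment bound of the form $\E[(M_t-M_s)^4]\le C(t-s)^2$ is accessible; a localization via stopping at an exhausting sequence of compacts is indispensable, and you omit it entirely. Second, even on a fixed compact $K$, the combination of \eqref{1/n} with the Lipschitz--Markov property does not yield concentration on a window of order $\sqrt{t-s}$: what one actually obtains is that the function $x\mapsto m_1(\pi_x^{s,t})$ is $2$-Lipschitz and integrates to at most $h^{1/2}$ against $\mu_s$, and the density lower bound then gives only the \emph{first} moment estimate $m_1(\pi_x^{s,t})\le D\,h^{1/4}$ for $x\in K$ (Lemma~\ref{sublemma}). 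There is no direct path from this to a fourth-moment bound, and your sentence ``From such a quantitative estimate I would derive $\E[(M_t-M_s)^4]\le C(t-s)^2$'' is precisely the step that does not go through as stated.

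The paper bridges this gap differently. The first-moment estimate is extended to arbitrary stopping times in $[t_0,t_0+h]$ (Lemma~\ref{stop}), which is exactly a uniform $BMO_1$ bound of order $h^\beta$ on the approximating continuous martingales. For \emph{continuous} martingales the John--Nirenberg inequality makes all $BMO_q$ norms equivalent, so one obtains $BMO_q$ control of order $h^\beta$ for any $q$; choosing $q>1/\beta$ and running a pigeonhole contradiction argument on the f.d.d.\ limit (Proposition~\ref{criterion}) then rules out jumps. The John--Nirenberg step is the non-obvious ingredient that replaces your hoped-for Kolmogorov--Chentsov bound, and the final proof combines this with the stopping/localization argument to handle the whole interval $I$.
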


There is an obvious and well-known strategy for the proof. We want to obtain the desired martingale $M$ as a limit of approximations which fit the peacock $(\mu_t)_{0 \le t \le 1}$ on finitely many points of time. As in \cite{HiRoYo14} it is convenient to do so along the ordered set $\cal S$ of finite partitions $ S = \{s_0,\dots,s_n\} $ of $[0,1]$, where $0 = s_0 < \dots < s_n =1$.

 For each $S \in \cal S$ we choose a continuous strong Markov martingale $M^S$ having the given marginals at each time $s_i \in S$, the existence of  $M^S$  is a direct consequence of Remark \ref{LMkernelE}. 
 
Identifying the martingales $M^S$ with their induced measures on the path space $\mathcal C[0,1]$ this family of measures is tight, if considered on Skorohod space $\mathcal D[0,1]$, equipped with the topology of convergence of finite dimensional distributions.
Hence we can find a cluster point $M$ in the set $\mathcal P(\mathcal D[0,1])$ of probability measures on $\mathcal D[0,1]$, see e.g.\ \cite{BeHuSt16} for the straightforward argument.
By refining the filter $\mathcal S$ we may suppose that $M$ is a limit point. 

We fix such a limiting process $M$ which by  Proposition \ref{LMclosed} is a Lipschitz Markov martingale.
These arguments again are standard by now and, e.g., well presented in the papers \cite{HiRoYo14, BeHuSt16}.
A priori the martingale $M$ has c\`adl\`ag trajectories.
Our present task is to show the \textit{continuity} of the trajectories of the limiting process $M$ under the above assumptions.

We first give a general criterion for the continuity of a limiting martingale $M$ which is somewhat reminiscent of the classical Kolmogorov continuity criterion \cite[Theorem 1.8]{ReYo99}.

\begin{proposition}\label{criterion}
	Let $(M^i)_{i \in I}$ be a net of $\R$-valued continuous strong Markov martingales  $M^i = (M^i)_{0 \le t \le 1}$ and $M$ its limit in the set of probabilities on Skorohod space $\mathcal P(\mathcal D[0,1])$ with respect to convergence in finite dimensional distribution.
	Suppose that there are constants $C_1>0$ and $\beta > 0$ such that, for every $0 \le t_0 < t_0 + h \le 1$ and every $i \in I$
	\begin {equation}\label{BMO}
	\lVert (M^i_t)_{t_0 \le t \le t_0 + h}\rVert_{{BMO}_1} := \sup_\tau\{ \lVert \E[|M^i_{t_0 + h} -M^i_\tau| \mid {\cal{F}}_{\tau}] \rVert_\infty   \}   \le  C_1 h^\beta.
	\end{equation}
	Then the martingale $M$ has continuous trajectories.
\end{proposition}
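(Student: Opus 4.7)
The plan is to transfer the uniform $BMO_1$ bound on the approximants $(M^i)_{i \in I}$ to a Kolmogorov--Chentsov-type moment estimate on the limit $M$, and then read off continuity from that moment bound.

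First I would upgrade the $BMO_1$ hypothesis to genuine $L^p$ control for every $p \ge 1$. On an interval $[t_0, t_0+h]$ the continuous martingale $M^i$ has $BMO_1$-norm at most $C_1 h^\beta$ by \eqref{BMO}. Since on the space of continuous martingales all $BMO_p$-norms ($p \in [1,\infty)$) are equivalent, as a classical consequence of the John--Nirenberg inequality, one obtains universal constants $K_p$ with
\begin{equation*}
\E\bigl[|M^i_{t_0+h} - M^i_{t_0}|^p\bigr] \le K_p\,(C_1 h^\beta)^p
\end{equation*}
uniformly in $i$, $t_0$, and $h$. Equivalently, one can integrate the exponential tail bound $\P(|M^i_{t_0+h} - M^i_{t_0}| \ge \lambda) \le A e^{-a\lambda/(C_1 h^\beta)}$ that John--Nirenberg delivers.

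Next I would pass these bounds to the limit. For fixed $0 \le s < t \le 1$, the two-dimensional marginals $(M^i_s, M^i_t)$ converge weakly to $(M_s, M_t)$ by assumption, and $(x,y) \mapsto |x-y|^p$ is continuous and non-negative, so Fatou's lemma for weak convergence (equivalently the portmanteau theorem applied to open super-level sets) gives $\E[|M_t - M_s|^p] \le K_p\,(C_1(t-s)^\beta)^p$. Choosing $p$ so large that $p\beta > 1$ rewrites this as $\E[|M_t - M_s|^p] \le K\,|t-s|^{1+(p\beta-1)}$, which is precisely the hypothesis of the Kolmogorov--Chentsov continuity criterion. That criterion produces a modification $\tilde M$ of $M$ with locally H\"older sample paths. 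Since $M$ is a random element of Skorohod space and hence has c\`adl\`ag trajectories almost surely, and since two c\`adl\`ag processes that are modifications of one another are indistinguishable, $M$ itself has continuous trajectories almost surely.

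The main obstacle is the first step: promoting a one-sided $BMO_1$ bound into $L^p$ control for arbitrarily large $p$. This is exactly where the continuity of the approximants $M^i$ enters in an essential way, through John--Nirenberg (equivalently, the equivalence of $BMO_p$-norms on continuous martingales). The remaining ingredients --- weak convergence on two-point marginals, the application of Kolmogorov--Chentsov, and the c\`adl\`ag modification-versus-indistinguishability step --- are by comparison routine.
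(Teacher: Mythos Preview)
Your argument is correct, and it takes a genuinely different route from the paper's. Both proofs share the crucial first step: invoking the John--Nirenberg inequality (equivalently, the equivalence of $BMO_q$-norms for continuous martingales) to upgrade the $BMO_1$ bound to control of $q$-th moments for arbitrarily large $q$. From there the paths diverge. The paper argues by contradiction: if $M$ had a jump of size at least $3a$ with probability exceeding some $\kappa>0$, then by c\`adl\`ag regularity and pigeonholing one finds, for every small $h$, a deterministic interval $[t_0,t_0+h]$ on which $|M_{t_0+h}-M_{t_0}|\ge 2a$ with probability at least $\kappa h$; this passes back to some approximant $M^i$ by f.d.d.\ convergence, and the resulting lower bound on a conditional $q$-th moment contradicts the $BMO_q$ estimate once $q>1/\beta$. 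You instead pass the uniform $L^p$ moment bound directly to the limit via lower semicontinuity under weak convergence, then apply Kolmogorov--Chentsov and the indistinguishability of c\`adl\`ag modifications. Your route is cleaner and more modular, delegating the hard work to a standard black box; the paper's route is more self-contained, effectively reproving the relevant piece of Kolmogorov--Chentsov by hand.
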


In (\ref{BMO}) $\tau$ runs through the $[t_0,t_0 + h]$-valued stopping times with respect to the natural filtration of $ (M^i_t)_{t_0 \le t \le t_0 + h}$. 
As $M^i$ is strong Markov, condition (\ref{BMO}) is tantamount to the requirement that the first moments $m_1(\pi_x^{i,\tau ,t_0 +h}) = \int_{-\infty}^\infty |y-x| \, d\pi_x^{i ,\tau ,t_0 +h}(y)$ of the transition probabilities $\pi_x^{i,\tau,t_0+h}$ from $M^i_\tau = x$ to $M^i_{t_0 + h}$ satisfy 
\begin{equation}\label {beta}
 	m_1( \pi_x^{i,\tau,t_0 +h})  = \int_{-\infty}^\infty |y-x| \, d\pi_x^{i ,\tau ,t_0 +h}(y) \le C_1 h^\beta, 
\end{equation}
for $\mu^i_\tau$-almost all $x \in \R$, where $\mu^i_\tau$ denotes the law of $M^i _\tau$.

An important feature of the $BMO$ norms for \textit{continuous} martingales is that, by  the John-Nirenberg inequality, all $BMO_q$ norms are equivalent, for $1 \le q < \infty$ (see, e.g.\ \cite[Corollary 2.1]{Ka06b}). Applying this fact to the present context, inequality (\ref{BMO}) is equivalent to the existence of a constant $C_q > 0$ (for some or, equivalently, for all $1 \le q < \infty$) such that

\begin {equation}\label{BMOq}
\lVert (M^i_t)_{t_0 \le t \le t_0 + h}\rVert_{\text{BMO}_q} := \sup_\tau\left\{ \left\lVert \E[|M^i_{t_0 + h} -M^i_\tau|^q \mid {\cal{F}}_{\tau}]^{\frac{1}{q}} \right\rVert_\infty \right\} \le C_q h^\beta.
\end{equation}

\begin{proof}[Proof of Proposition \ref{criterion}]
Suppose that $M$ fails to be continuous and let us work towards a contradiction to \eqref{BMOq} when $q > \frac{1}{\beta}$.
Assume $M$ has jumps of size bigger than $3a > 0$ with probability bigger than $\kappa >0$, i.e.
\[
	\mathbb P\left( \left\{ \exists t \in [0,1] \colon |M_{t} - M_{t-}| \geq 3a \right\} \right) > \kappa.
\]
As $M$ has c\`adl\`ag paths, there is $h_0 > 0$ such that for all $0 < h \leq h_0$
\[
	\mathbb P\left( \left\{ \exists k \in \mathbb N \colon |M_{(kh)\wedge 1} - M_{((k-1)h)\wedge 1}| \ge 2a \right\} \right) > \kappa.
\]
By the pigeon hole principle, we can find for each $0 < h \leq h_0$ a time $t_0 \in [0,1]$ with
\begin{equation}\label{a}
	\P\left( \left\{ |M_{(t_0 +h)\wedge 1} - M_{t_0}| \ge 2a \right\}  \right) > h\kappa.
\end{equation}
In view of the convergence in finite dimensional distributions of $(M^i)_{i \in I}$ to $M$, we find for each $0 < h \leq h_0$ a time $t_0 \in [0,1]$ (w.l.g.\ $t_0 + h \leq 1$) and an index $i \in I$ with
\begin{equation}\label{a'}
	\P\left( \left\{ |M^i_{t_0 + h} - M^i_{t_0}| \ge a \right\} \right) \geq h \kappa.
\end{equation}
Fixing such an index $i \in I$, it follows that there is a set $A \subset \R$ of positive measure with respect to the law of $M^i_{t_0}$ such that, for all $x \in A$,
\begin{equation}\label{acond}
	\P\left( \left\{ |M^i_{t_0 +h} - M^i_{t_0}| \ge a \mid M^i_{t_0} = x \right\} \right) \geq h\kappa.
\end{equation}
For $x \in A$ we therefore have
\begin{equation}\label{mq}
	m_q(\pi_x^{i ,t_0 ,t_0 +h}) = \left(\int_{-\infty}^\infty |y-x|^q \, d\pi_x^{i,t_0 ,t_0 +h}(y)\right)^{\frac{1}{q}} \geq a(h\kappa )^{\frac{1}{q}}.
\end{equation}
As $q > \frac{1}{\beta}$ we can choose $0 < h \leq h_0$ sufficiently small, with $a (\kappa h)^{\frac{1}{q}} > C_q h^\beta$, and arrive at the desired contradiction to \eqref{BMOq}
\begin{equation*}
	C_q h^\beta \geq \lVert (M_t^i)_{t_0 \leq t \leq t_0 + h} \rVert_{BMO_q} \geq m_q(\pi_x^{i,t_0,t_0+h}) \geq a (h\kappa)^{\frac{1}{q}} > C_qh^\beta.
\end{equation*}
\end{proof}

Turning back to the above defined family $(M^S)_{S \in \cal S}$ of martingales we shall establish an inequality of the type (\ref{beta}) for the transition probabilities $\pi_x^{S ,t_0 ,t_0 +h}$, using the fact that the functions $x \mapsto \pi_x^{S,t_0,t_0+h} $ are Lipschitz-Markov.

\begin{lemma}\label{sublemma}
	Let $(\mu_t)_{0 \le t \le 1}$ be a peacock satisfying Assumption \ref{Ass}.
	Fix a compact set $K \subset I$.
	For all $h > 0$ sufficiently small and $x \in K$, there is a constant $D > 0$ such that for all $S \in \mathcal S$ and $0 \leq t_0 \leq t_0 + h \leq 1$, with $t_0, t_0 + h \in \mathcal S$, the first moments of the transition measures $\pi_x^{S ,t_0 ,t_0 +h}$ can be estimated by
	\begin{equation} \label{m1}
		m_1\biggl(\pi_x^{S ,t_0 ,t_0 +h}\biggr) := \E_\P\left[\left|M^S_{t_0 +h} - M^S_{t_0}\right|  \biggm| M^S_{t_0}=x\right] 
		=\int_{-\infty}^\infty \lvert y-x\rvert \, d\pi_x^{S ,t_0 ,t_0 +h}(y)
		\le D h^{\frac{1}{4}},
	\end{equation}
	where $\P$ denotes the law of the martingale $M^S$.
\end{lemma}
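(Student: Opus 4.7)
The plan is to combine the Lipschitz-Markov regularity of $M^S$ (a consequence of strong Markovianity via Corollary \ref{SLM}) with the $L^2$-normalization \eqref{1/n} to upgrade an averaged bound on $f(x):=m_1(\pi_x^{S,t_0,t_0+h})$ into a pointwise one on the compact $K$.

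First I would collect two ingredients, uniform in $S$, $t_0$, and $h$. \emph{Lipschitz regularity of $f$.} Since $M^S$ is a continuous strong Markov martingale, Corollary \ref{SLM} gives the Lipschitz-Markov estimate $\mathcal W_1(\pi_x^{S,t_0,t_0+h},\pi_y^{S,t_0,t_0+h})\leq|x-y|$, and writing $f(x)=\mathcal W_1(\pi_x^{S,t_0,t_0+h},\delta_x)$ the triangle inequality (together with $\mathcal W_1(\delta_x,\delta_y)=|x-y|$) yields $|f(x)-f(y)|\leq 2|x-y|$. \emph{Averaged bound.} By Cauchy--Schwarz and \eqref{1/n}, $\int f(x)\,d\mu_{t_0}(x)=\E[|M^S_{t_0+h}-M^S_{t_0}|]\leq h^{1/2}$. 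I would fix once and for all a compact $K'\subset I$ with $K\subset\mathrm{int}(K')$, setting $\delta_1:=\mathrm{dist}(K,\partial K')>0$; by Assumption \ref{Ass} there is a uniform $c>0$ with $p_t\geq c$ on $K'$ for every $t\in[0,1]$, hence $\int_{K'}f(x)\,dx\leq h^{1/2}/c$.

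Next I would combine the two to extract the pointwise estimate. For $x_0\in K$ set $\delta:=f(x_0)$. If $\delta\leq 4\delta_1$, then $J:=[x_0-\delta/4,\,x_0+\delta/4]\subset K'$ and the $2$-Lipschitz bound forces $f\geq\delta/2$ on $J$, giving $\delta^2/4\leq\int_{K'}f\leq h^{1/2}/c$ and hence $\delta\leq(2/\sqrt{c})\,h^{1/4}$. The ``large'' alternative $\delta>4\delta_1$ can be excluded for $h$ small enough: the same Lipschitz argument applied on the fixed interval $[x_0,x_0+\delta_1]\subset K'$ would force $f\geq 2\delta_1$ there, yielding $2\delta_1^2\leq h^{1/2}/c$, which fails as soon as $h<4c^2\delta_1^4$. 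Taking $h$ below this threshold and setting $D:=2/\sqrt{c}$ completes the argument.

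The main obstacle is conceptual rather than computational: the averaged bound alone is perfectly consistent with $f$ having a sharp spike at a single point, and so cannot by itself yield any pointwise estimate. The decisive input is the $2$-Lipschitz regularity of $f$, which is precisely where the strong Markov property of $M^S$ is used via Corollary \ref{SLM}; uniformity of the density lower bound in $t$ (from Assumption \ref{Ass}) is what keeps $c$, $\delta_1$, and hence $D$ independent of $t_0$ and $S$.
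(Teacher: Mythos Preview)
Your proof is correct and follows essentially the same route as the paper: both combine the $2$-Lipschitz bound on $x\mapsto m_1(\pi_x)$ (from the Lipschitz-Markov property) with the averaged estimate $\int m_1(\pi_x)\,d\mu_{t_0}(x)\le h^{1/2}$ and the uniform density lower bound to turn the integral control into a pointwise one via a triangle argument. The only cosmetic differences are that the paper uses a one-sided interval $[x_0,x_0+1]$ (after first reducing to $I=\R$) and a preliminary rough bound $F(x_0)\le 2$, whereas you work directly with a centered interval of adaptive length $\delta/2$ inside a fixed $K'$; your organization is arguably a bit cleaner.
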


\begin{proof}
We first suppose that $I=\R$. By \eqref{1/n} and Jensen's inequality we have
\[
	\E_\P[|M^S_{t_0+h} - M^S_{t_0}|] \le  h^{\frac{1}{2}}.
\]
 We may rewrite this inequality in the form 
\begin{align} \nonumber
\E_\P[|M^S_{t_0+h} - M^S_{t_0}|] &= \E_\P[\E_\P[|M^S_{t_0+h} - M^S_{t_0}|  \big | M^S_{t_0}]] = \int_{-\infty}^\infty m_1\biggl(\pi_x^{S ,t_0 ,t_0 +h}\biggr) \, d\mu_{t_0}(x) 
\\ \label{n1/2}	 &=    \int_{-\infty}^\infty F(x) \, d\mu_{t_0}(x)\le  h^{\frac{1}{2}},
\end{align}
where, alleviating the notation from $\pi_x^{S ,t_0 ,t_0 +h}$ to $\pi_x$, and
\[ F(x)=\int_{-\infty}^\infty |x - y| \, d\pi_x(y). \]

\textit{Claim:} The function $x \mapsto F(x)$ satisfies the estimate
\begin{equation}\label{oneLip}
	|F(x) - F(x+k)| \le 2k,  \qquad x \in \R,\quad k>0.
\end{equation}
Indeed,

\begin{align*}
	|F(x+k) - F(x)| &= \left| \int_{-\infty}^{\infty} |y - (x + k)| \, d\pi_{x+k}(y) - \int_{-\infty}^{\infty} |y-x| \, \pi_x(y) \right|
	\\ &\leq  k + \left| \int_{-\infty}^\infty |y - (x+k)| \, d\pi_{x}(y) - \int_{-\infty}^\infty |y-x| \, d\pi_x(y) \right| \leq 2k
\end{align*}
proving the claim. 
In the first inequality we have used the fact that $\pi$ is a Lipschitz Markov kernel.

Choose a compact interval $[a,b]$ such that $K \subseteq [a+1,b-1]$ and denote by $l>0$ a lower bound for the density function $p_{t_0}$ of $\mu_{t_0}$ on $[a,b]$.

We want to estimate $F(x_0)$, for $x_0 \in K$, and start by showing the rough estimate $F(x_0) \le  2$.
Using (\ref{oneLip}) we otherwise have $\int_{x_0}^{x_0+1} F(x) \, dx \ge 1$ and we arrive at the following contradiction to (\ref{n1/2}) for small enough $h$
\[ 1 \le \int_{x_0}^{x_0+1} F(x) \, dx  \le \frac{1}{l} \int_{x_0}^{x_0+1}F(x) \, dp_{t_0}(x) dx \le \frac{1}{l} h^{\frac{1}{2}}. \]

If $F(x_0) \le 2$ we may argue similarly, using again
(\ref{oneLip}) and elementary geometry, to obtain, for $x_0 \in K$,
\[ \frac{1}{8} (F(x_0))^2 \le \int_{x_0}^{x_0+1} F(x) \, dx \le \frac{1}{l} \int_{x_0}^{x_0+1}F(x) \, d\mu_{t_0}(x) \le \frac{1}{l} h^{\frac{1}{2}},\]
yielding the desired estimate for $h$ sufficiently small
\[ \sup_{x\in K}   m_1(\pi_x) =   \sup_{x\in K}  F(x) \le D h^{\frac{1}{4}}, \]
where the constant $D>0$ only depends on the compact set $K$, but not on $h$.

Finally we have to come back to our assumption $I=\R$ which allowed us to imbed the compact set $K$ into the interval $[a,b] \subset I$ such that $ K \subset [a+1,b-1]$. If $I$ is only one- or two-sided bounded we have to reason slightly more carefully, as we can imbed the compact set $K$ only into an interval $[a,b] \subset I$ such that $[a + \epsilon,b- \epsilon]$ contains $K$. But no difficulties arise from replacing $1$ by $\epsilon$ and it is straightforward to adapt the above argument also to this situation.
\end{proof}

Under the assumptions of Lemma \ref{sublemma}, Tschebyscheff's inequality and \eqref{m1} allow us to control the difference between medians and means, since for fixed $0<\delta<\frac{1}{4}$
\begin{equation*}
	\pi_x^{S,t_0,t_0 + h}\left( y \colon |y - x| \geq h^{\delta} \right) \leq  Dh^{\frac{1}{4} - \delta}
\end{equation*}
for $h > 0$ sufficiently small, $x \in K$ and feasible $S \in \mathcal S$. 
Hence, we have in the setting of Lemma \ref{sublemma} that
\begin{equation}\label{mean med control}
	\left\lvert\text{median}(\pi_x^{S,t_0,t_0 + h}) - \text{mean}(\pi_x^{S,t_0,t_0+h})\right\lvert \leq h^{\delta}.
\end{equation}

\begin{lemma} \label{stop}
Let $0 < \delta < \frac{1}{4}$.
Under the assumptions of the Lemma \ref{sublemma} the same conclusion as in (\ref{m1}) holds true for every $[t_0,t_0+h]$-valued stopping time $\tau$ (by possibly changing the constant $D$ to a different $C$) for $x \in K$
\begin{equation} \label{mtau}
	m_1\biggl(\pi_x^{S ,\tau ,t_0 +h}\biggr) := \E_\P\left[\left|M^S_{t_0 +h} - M^S_{\tau}\right|  \biggm| M^S_{\tau}=x\right] 
	=\int_{-\infty}^\infty \lvert y-x\rvert \, d\pi_x^{S ,\tau ,t_0 +h}(y)
	\le C h^{\delta}.
\end{equation}
\end{lemma}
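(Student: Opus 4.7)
The plan is to use the strong Markov property of $M^S$ to reduce the stopping-time conditional expectation to a deterministic-time problem, and then exploit a convex-order monotonicity in the starting time to anchor the bound at $s=t_0$, where Lemma \ref{sublemma} already applies.

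First, by strong Markov, for every Borel set $A$ and $\P$-a.s.\ $\omega$,
$$\P(M^S_{t_0+h}\in A\mid\mathcal F_\tau)(\omega)=\pi_{M^S_\tau(\omega)}^{S,\tau(\omega),t_0+h}(A),$$
so that with $G(s,x):=m_1(\pi_x^{S,s,t_0+h})=\E[\,|M^S_{t_0+h}-x|\mid M^S_s=x\,]$ defined for every deterministic $s\in[t_0,t_0+h]$ and $x\in\R$,
$$m_1(\pi_x^{S,\tau,t_0+h})=\E[\,G(\tau,x)\mid M^S_\tau=x\,].$$
Hence it suffices to bound $G(s,x)$ uniformly in $s\in[t_0,t_0+h]$ for $x\in K$; a bound of the form $D h^{1/4}$ will then be inherited and automatically dominated by $D h^\delta$, since $h^{1/4}\le h^\delta$ for $h\in(0,1]$ and $\delta<1/4$.

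The heart of the argument is to show that $s\mapsto G(s,x)$ is non-increasing on $[t_0,t_0+h]$ for each fixed $x$. For $s_1\le s_2\le t_0+h$, Chapman--Kolmogorov gives
$$\pi_x^{S,s_1,t_0+h}=\int\pi_y^{S,s_2,t_0+h}\,d\pi_x^{S,s_1,s_2}(y),$$
and therefore
$$G(s_1,x)=\int\phi(y)\,d\pi_x^{S,s_1,s_2}(y),\qquad\phi(y):=\int|w-x|\,d\pi_y^{S,s_2,t_0+h}(w).$$
Since $M^S$ is a continuous strong Markov martingale, Corollary \ref{corr}~(iii) applied to the convex function $g(w)=|w-x|$ (with $x$ fixed) ensures that $\phi$ is convex in $y$. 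As $\pi_x^{S,s_1,s_2}$ is a martingale kernel with mean $x$, Jensen's inequality then yields
$$G(s_1,x)=\int\phi\,d\pi_x^{S,s_1,s_2}\ \ge\ \phi(x)=G(s_2,x),$$
which is the claimed monotonicity.

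Combining the two steps, for every $s\in[t_0,t_0+h]$ and every $x\in K$ we have $G(s,x)\le G(t_0,x)\le D h^{1/4}$ by Lemma \ref{sublemma}, and substituting into the representation above gives $m_1(\pi_x^{S,\tau,t_0+h})\le D h^{1/4}\le D h^\delta$, so we may take $C=D$. I anticipate no real obstacle here: the strong-Markov reduction is standard, and the convex-order monotonicity is precisely the structural property Corollary \ref{corr}~(iii) was engineered to deliver. The conceptual point worth emphasising is that this monotonicity sidesteps the need for any density lower bound on the law of $M^S_\tau$ (which we do not have, as $\tau$ need not take values in the grid $\mathcal S$); without it, one would be forced to try to repeat the integration-by-$\mu_{t_0}$ step of Lemma \ref{sublemma} with $t_0$ replaced by $\tau$, and the absence of a usable lower bound on the density of the law of $M^S_\tau$ would become fatal.
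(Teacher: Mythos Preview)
Your argument is correct, and it is genuinely different from (and cleaner than) the paper's proof. Both proofs begin by reducing, via the strong Markov property, to bounding $G(s,x)=m_1(\pi_x^{S,s,t_0+h})$ for deterministic $s\in[t_0,t_0+h]$. From there the strategies diverge.

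The paper fixes $y\in K$ and $t\in[t_0,t_0+h]$, finds the starting point $x$ for which $y$ is the \emph{median} of $\pi_x^{S,t_0,t}$, and then uses first-order monotonicity (Corollary~\ref{corr}(i)) together with the median property to obtain
\[
G(t,y)\ \le\ 2\,G(t_0,x)+2|x-y|.
\]
The term $|x-y|$ is controlled by the mean--median estimate \eqref{mean med control}, which is where the restriction $\delta<\tfrac14$ enters and why the final bound is $Ch^\delta$ rather than $Ch^{1/4}$.

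You instead exploit convexity preservation (Corollary~\ref{corr}(iii)): writing $\phi(y)=\E[\,|M^S_{t_0+h}-x|\mid M^S_{s_2}=y\,]$, convexity of $\phi$ plus Jensen applied to the martingale kernel $\pi_x^{S,s_1,s_2}$ gives the monotonicity $G(s_1,x)\ge G(s_2,x)$ for $s_1\le s_2$, whence $G(s,x)\le G(t_0,x)\le Dh^{1/4}$ directly. This is shorter, avoids the mean--median detour, and in fact yields the sharper exponent $1/4$ (so the assumption $\delta<1/4$ becomes irrelevant and one may take $C=D$). The paper's route has the mild advantage of relying only on first-order monotonicity rather than convexity preservation, but since both are available for continuous strong Markov martingales this does not matter here.
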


\begin{proof}
	By Corollary \ref{SLM} we have that $x \mapsto \pi^{S,t_0,t}$ is Lipschitz Markov for all $s \in [t_0,t_0+h]$.
	From this, we can deduce the continuity of the map
	\[
		(x,t) \mapsto \int_{-\infty}^{\infty} |z-y| \, d\pi^{S,t,t_0 + h}(z).
	\]
	Therefore, it suffices to show \eqref{mtau} for deterministic $\tau \equiv t \in [t_0,t_0 + h]$ due to the strong Markov property.
	To this end, let $\tilde K$ be a compact interval in $I$, containing the compact set $K$ in its interior and fix the constant $D$ from Lemma \ref{sublemma}, applied to $\tilde K$.

	To do so, find $x \in I$ such that $y$ equals the median of the measure $\pi_x^{S,t_0,t}$, that is
	\[ \pi^{S,t_0,t}_x(]-\infty, y[) \leq \frac{1}{2} \leq \pi^{S,t_0,t}_x(]-\infty, y]).  \]
	Since $x \mapsto \pi_x^{S,t_0,t}$ is Lipschitz Markov by Corollary \ref{SLM}, such an $x$ exists.
	We may use $x$ to obtain the following estimate
	\begin{multline}\label{+}
		\E_\P\left[\left|M^S_{t_0 +h} - y \right| \biggm| M^S_{\tau}=y\right] 
		\\ \le \E_\P\left[\left(M^S_{t_0 +h} - y\right)_+  \biggm| M^S_{\tau} \ge y, M^S_{t_0} = x\right] + \E_\P\left[\left(M^S_{t_0 +h} - y \right)_-  \biggm| M^S_{\tau} \le y, M^S_{t_0} = x\right]
		\\ \le 2 \E_\P\left[ \left| M^{S,t_0,t_0+h} - y \right| \biggm| M^{S}_{t_0} = x \right] \\ \le 2 \E_\P\left[ \left| M^{S,t_0,t_0+h} - x \right| \biggm| M^{S}_{t_0} = x \right] + 2|x - y|.
	\end{multline}
	Here, we used the first item of Corollary \ref{corr} for the first inequality and $y$ being the median of $\pi^{S,t_0,\tau}_x$ for the second.
	Note that for $h$ sufficiently small we have by \eqref{mean med control} that $x \in \tilde K$. 
	Applying the estimates \eqref{m1} and \eqref{mean med control} to \eqref{+} we find
\[
	\E_\P\left[\left|M^S_{t_0 +h} - y \right| \biggm| M^S_{\tau}=y\right] \le 2D h^\frac{1}{4} + 2h^\delta \leq \left( 2D + 2 \right) h^\delta,
\]
which concludes the proof.
\end{proof}

\begin{proof}[Proof of Theorem \ref{cont}]
	We have to combine Proposition \ref{criterion} and Lemma \ref{stop} with a stopping argument.
	To this end, let $(K_n := [a_n,b_n])_{n \in \mathbb N}$ be an increasing sequence of compact intervals exhausting the interval $I = ]a,b[$, where $a,b \in [-\infty, \infty]$. We let $K_1 = \emptyset$.
	Define the stopping times $\tau^{S,n}$ as the first moment when the continuous martingale $M^S$ leaves $K_n$, write $M^{S,n}$ for the stopped process, and denote the differences $M^{S,n+1} - M^{S,n}$ by $\Delta M^{S,n}$.

	For the processes $\Delta M^{S,n}$ the assumptions of Proposition \ref{criterion} still hold true as a consequence of Lemma \ref{stop} and the strong Markov property of $M^S$.
	Consider the process
	

	\[
		\tilde M^S_t := (\Delta M_t^{S,n})_{n \in \mathbb N}, \quad t \in [0,1],
	\]
	taking values in $\R^\mathbb N$.
	Moreover, let 
	\[
		m^S := \inf \left\{ k \in \mathbb N \colon \left\lvert \Delta M^{S,k} \right\rvert_\infty = 0 \right\}	
	\]
	be the smallest integer such that the entire trajectory of $(M^S_t(\omega))_{0 \le t \le 1}$ is contained in $K_{m^S}$.
	We know already that for every $n\in\mathbb N$, $M^{S,n}$ (and therefore $\Delta M^{S,n}$) allows for f.d.d.\ convergent subnets along $\mathcal S$ where the limits have by Proposition \ref{criterion} continuous versions.
	We want to argue that similarly the pair $(\tilde M^S,m^S)_{S\in\mathcal S}$ taking values in $\mathcal C[0,1]^\mathbb N \times \mathbb N$ admits a convergent subnet, too.
	For this reason, it is sufficient to show the following.

	\emph{Claim:} For every $\epsilon > 0$ there is $n \in \mathbb N$ such that, for every $ S \in \mathcal S$,
	\begin{equation} \label{cont claim}
		\sup_{S \in \mathcal S} \mathbb P\left( m^S > n \right) < \epsilon.
	\end{equation}
	 Indeed, for each $n\in \mathbb N$ (sufficiently large), there are maximal positive numbers $\alpha_n,\beta_n$ with $\alpha_n + \beta_n \le 1$ such that the probability measure
	\[
		\alpha_n \delta_{a_n} + \beta_n \delta_{b_n} + (1 - \alpha_n - \beta_n) \delta_{\frac{\text{mean}(\mu_1) - \alpha_n a_n - \beta_nb_n}{1 - \alpha_n - \beta_n}}
	\]
	is dominated by $\mu_1$ in convex order.
	Since $\mu_1$ puts no mass to the boundary of $I$, there is for each $\epsilon > 0$ an index $N \in \mathbb N$ such that $\alpha_n + \beta_n < \epsilon$ for all $n \ge N$.
	The law of $M^{S,n}$ is dominated in the convex order by $\mu_1$.
	By maximality of $\alpha_n$ and $\beta_n$ we find uniformly for all $S \in \mathcal S$
	\[
		\mathbb P\left( m^{S} > n \right) = \mathbb P\left( \tau^{S,n} < \infty \right) = \mathbb P\left( M^{S,n} \in \{a_n,b_n\} \right) \leq \alpha_n + \beta_n < \epsilon,
	\]
	which yields the claim \eqref{cont claim}.

	By passing to a subnet, still denoted by $\mathcal S$, we thus obtain that $(\tilde M^S, m^S)_{S \in \mathcal S}$ admits a limit $(\tilde M,m)$ w.r.t.\ convergence of finite dimensional distributions, where
	\[ \tilde M_t = (\Delta M_t^n)_{n \in \mathbb N},\qquad t \in [0,1], \]
	and $(\Delta M_t^{S,n})_{S \in \mathcal S}$ has $\Delta M_t^n$ as its f.d.d.\ limit.
	Due to Proposition \ref{criterion} we may choose a version of $\tilde M$ taking values in $\mathcal C[0,1]^\mathbb N$.
	Consider the following process
	\[
		\hat M_t := \sum_{n = 1}^m \Delta M_t^n,	\qquad t \in [0,1],
	\]
	which has continuous trajectories as $m$ is integer-valued finite.
	Note that f.d.d.\ convergence of $\tilde M^S$ to $\tilde M$ yields f.d.d.\ convergence of
	\[
		M^S = \sum_{n = 1}^{m^S} \Delta M^{S,n} \to \sum_{n = 1}^m \Delta M^n = \hat M.
	\]
	We conclude that $\hat M$ and $M$ coincide in law, and thus the Lipschitz Markov martingale $M$ has a version with continuous paths.
\end{proof}

\section{Overview of related results in the literature} In the last sections we have focused on specific aspects of the theory of mimicking processes. 
In this final section we provide an overview  of related results in the literature and give some context for theorems discussed above. 

An early influential result is the work of Strassen \cite{St65} who established that there exists a submartingale with marginals $\mu_1, \ldots, \mu_N$ if and only these measures are increasing in the increasing convex order induced by increasing convex functions. He also proved that there exists a martingale with values in $\R^d$ and marginals $\mu_1, \ldots, \mu_N$  if and only if these measures are increasing in convex order. 
In \cite{Ke72, Ke73}, Kellerer managed to extend Strassen's result on the existence of submartingales to an arbitrary family of marginals. As an important particular case, this yields the existence of a mimicking martingale if the marginals increase in convex order.
Over time a number of authors have given new approaches to Kellerer's theorem see \cite{HiRo12, HiRoYo14, Lo08a, Lo08b, Lo08d, BeHuSt16, BeJu21}. As discussed extensively above the work Lowther adds substantial new developments. He  characterizes when the Markov martingale can be chosen to be continuous, as well as adding a clear-cut uniqueness part to  Kellerer's original result, complementing the formal uniqueness result of Dupire \cite{Du94}. 
We also recall from  above, that the question whether the natural extension of Kellerer's result to higher dimension holds true remains completely open. 


\medskip

Given a continuum of marginals which increase in convex order (and maybe satisfies additional technical conditions), different authors have provided specific constructions of (not necessarily Markovian) martingales that match these marginals. A main motivation stems from the calibration problem in mathematical finance. An additional goal has often been to give constructions that optimize particular functionals given the martingale and marginal  constraints since this yields robust bounds on option prices. Madan and Yor \cite{MaYo02} and 
K\''allblad, Tan, and Touzi \cite{KaTaTo15} establish a time continuous version of the Azema-Yor embedding. 
Hobson \cite{Ho16} established a continuous time version of the martingale  coupling constructed in \cite{HoKl13}.
Henry-Labordere, Tan, and Touzi \cite{HeTaTo16} as well as
Br\''uckerhoff, Huesmann, and Juillet \cite{BrHuJu20} provide continuous time versions of the shadow coupling (originally introduced in \cite{BeJu16}). Richard, Tan, and Touzi \cite{RiTaTo20} give a continuous time version of the Root solution to the Skorokhod embedding problem.
In a slightly different but related direction, Boubel and Juillet  \cite{BoJu18} consider a continuum  of marginals on the real line that do not satisfy an order condition and construct a canonical Markov-process matching these marginals.  We also refer to the book \cite{HiPrRoYo11} of Hirsch, Profeta, Roynette, and Yor that collects a variety of related constructions.

The problem of finding martingales with given marginals has received specific attention in the case where these marginals equal the ones of Brownian motion. Hamza-Klebaner \cite{HaKl07}  posed the challenge of constructing martingales with Brownian marginals that differ from Brownian motion, so called \emph{fake Brownian motions}. 
 Non-continuous solutions can be found in  Madan-Yor \cite{MaYo02}, Hamza-Klebaner \cite{HaKl07}, Hobson \cite{Ho13}, and Fan-Hamza-Klebaner \cite{FaHaKl15} whereas continuous (but non-Markovian) fake Brownian motions were constructed by Oleszkiewicz \cite{Ol08}, Albin \cite{Al08}, Baker-Donati-Yor \cite{BaDoYo11} and Hobson \cite{Ho16}.
As already noted,  the accompanying article \cite{BePaSc21b} establishes that there exists  a Markovian martingale with continuous paths that  has Brownian marginals. 

\medskip

A somewhat different direction arises if one starts with marginals that do not merely satisfy a structural condition (specifically,  monotonicity in  convex order) but rather assumes that a set of marginals is generated from an Ito-diffusion
\begin{align}\label{SourceSDE} dX_t = \sigma_t \, dB_t + \mu_t\, dt\end{align}
and one seeks a Markovian diffusion 
$$ d\hat X_t = \hat \sigma_t (X_t) \, dB_t + \mu_t(X_t)\, dt$$
that mimicks the evolution of $X$ in the sense that $X_t \sim \hat X_t$ for each $t\geq 0$. The process $\hat X$ is then considered to be the \emph{Markovian projection} of $X$. This line of research goes back essentially to the work of Krylov \cite{Kr85} and Gy\''ongy \cite{Gy86}. Of course, also the work of Dupire \cite{Du94} can be seen as a formal contribution to this line of research.
A rigorous justification of Dupire's formula under rather general assumption is obtained by Klebaner \cite{Kl02}. 
A very general theorem on mimicking aspects of Ito processes is given by Brunick and Shreve \cite{BrSh13}. Recently, Lacker, Shkolnikov, and Zhang \cite{LaShZh20} show that the results of \cite{Gy86, BrSh13} can be established directly from the superposition principle of Trevisan \cite{Tr16} (or Figalli \cite{Fi08} in the case where \eqref{SourceSDE} has bounded coefficients). (Notably, the main focus of the work  \cite{LaShZh20} is a mimicking result that shows that conditional time marginals of an Ito-process can be matched by a solution of a conditional McKean-Vlasov SDE with Markovian coefficients.)


In the mathematical finance community, Markovian (local vol) models are often considered to exhibit dynamics that are not particularly realistic. There has been significant interest to combine the convenience that the local vol model offers in terms of calibration with more realistic dynamics that are exhibited by other classes of financial models. That is,  given a Markovian model $d\hat X_t = \hat \sigma_t (\hat X_t)\, dS_t$ that represents marked data, one would like to ``reconstruct'' a more  realistic model $d X_t = \sigma_t \, dB_t$ and thus to ``invert'' the Markovian projection. A concrete way to perform this inversion is the \emph{stochastic local volatility model} see the work of Guyon and Henry-Labordere \cite{GuHe11, GuHe12, GuHe13}. However, it is remarkably delicate to establish existence and uniqueness results for the resulting SDEs.
Partial solutions where given by 
Jourdain and Zhou \cite{JoZh16} and by Lacker Shkolnikov and Zhang in  \cite{LaShZh19}. The problem is also discussed by Acciaio and Guyon \cite{AcGu20} who consider it an important open problem to establish existence of the stochastic local volatility model under fairly general assumptions.


\bibliographystyle{abbrv}
\bibliography{../MBjointbib/joint_biblio}

\end{document}